\newtheorem{theorem}{Theorem}
\newtheorem{lemma}{Lemma}
\newtheorem{remark}{Remark}
\newtheorem{definition}{Definition}
\begin{document}

\tikzstyle{block} = [draw, rectangle, minimum height=3em, minimum width=3em]
\tikzstyle{sum} = [draw, circle, node distance=1cm]
\tikzstyle{input} = [coordinate]
\tikzstyle{output} = [coordinate]
\tikzstyle{pinstyle} = [pin edge={to-,thin,black}]

%
\title{Inversion-based Actuator Fault Estimation from I/O data for Minimum and Non-minimum Phase MIMO Systems}

\author{\IEEEauthorblockN{E.Naderi, K.Khorasani}\\
\IEEEauthorblockA{Department of Electrical and Computer Engineering\\
Concordia University, Montreal, Canada\\kash@ece.concordia.ca }
}


%


\maketitle


\begin{abstract}
We propose a framework for inversion-based estimation of certain categories of faults in discrete-time linear systems. The {fault signal, as an} unknown input{,} is reconstructed from its projections onto two subspaces. One projection is achieved through an algebraic operation, whereas the other is given by a dynamic filter whose poles coincide with the transmission zeros of the system. A feedback is then introduced to stabilize the above filter  as well as to provide an unbiased estimate of the unknown input. {Our solution has two distinctive and practical advantages. First, it represents a unified approach to the problem of inversion of both minimum and non-minimum phase systems as well as systems having transmission zeros on the unit circle. Second, the feedback structure makes the proposed scheme robust to noise}. {We} have shown that the proposed inversion filter is unbiased for certain categories of faults. Finally, we have  illustrated the performance of our proposed methodologies through numerous simulation studies.
\end{abstract}

\renewcommand{\baselinestretch}{0.88} 

\section{Introduction}

{The problem of estimating  faults that  occur in the system actuators and sensors has recently received extensive attention due to advances in the field of fault-tolerant control and growth in demand for higher levels of reliability and autonomy in safety critical systems. A number of approaches have been proposed for fault estimation of dynamical systems, such as unknown input observers (UIO) (\cite{chen2} and \cite{Tan20151048}) and sliding mode observers \cite{alwi2006robust}}. An important category of available solutions are known as \textit{inversion-based} approaches that are addressed in the works such as \cite{szigeti}, \cite{edel1}, \cite{figu}, \cite{kulcsar} and \cite{szabo}. However, these results have one major drawback in common. Specifically, they will fail for non-minimum phase systems. In fact, stable inversion of non-minimum phase systems is an outstanding challenge in any given context associated  with the problem of input reconstruction.

Inversion of linear systems was first systematically treated by Brocket and Mesarovic in \cite{Bro}. The classic references are  \emph{structure algorithm} \cite{silverman}, Sain \& the Massey algorithm \cite{SM}, and the  Moylan algorithm \cite{moylan}. Gillijns \cite{gili} has also proposed a general form of the Sain \& Massey algorithm in which certain free parameters are available that can be adjusted under certain circumstances to obtain a stable inverse system provided that the original system does not have any unstable transmission zeros (that is, minimum phase systems). The inversion problem has also been tackled by more complex methods. Palanthandalam-Madapusi and his colleagues have considered the problem of input reconstruction in several works, however the solutions provided all apply to only minimum-phase systems (\cite{chavan2015delayed}, \cite{palanthandalam2007unbiased} and \cite{kirtikar2009delay}). Flouquet and his colleagues proposed a sliding mode observer for the input reconstruction that is only valid for minimum phase systems \cite{floquet2004sliding}. Marro and Zattoni have proposed a geometric approach \cite{Marro2010815} for state reconstruction of both minimum and non-minimum phase systems. {However, this approach fails for systems that have transmission zeros on the unit circle.}  \\

{In this paper, a solution to fault estimation of linear discrete-time dynamical systems based on a novel inversion-based unknown input reconstruction methodology is proposed. The inversion-based unknown input reconstruction scheme has several practical \textit{advantages} over the available methods in the literature. These advantages further  highlight the main contributions of this work. Specifically, in the available solutions in the literature the system is partitioned into minimum-phase and non-minimum phase parts each of which is separately treated to finally reconstruct the unknown input. Generally, the non-minimum phase part is handled by using the so-called preview-based stable-inversion method. On the other hand, it is well-known that one cannot definitely determine the location of the system zeros due to parameter uncertainties in the model. This fact leads to an important practical issue in case that the system has transmission zeros that are close to the unit circle. In actuality, the system zeros may lie outside, inside or even exactly on the unit circle. Consequently, one cannot successfully apply the available methods to this class of systems. Moreover, preview-based stable-inversion methods are sensitive to noise and they generate large biases in the reconstruction process of the unknown input.}\\

{Our proposed inversion-based method on the other hand overcomes and is void of disadvantages associated with the available methods in the literature. \underline{First}, our proposed methodology can handle both minimum phase and non-minimum phases systems as well as systems having transmission zeros on the unit circle under a \textit{single framework}. Therefore, one does not need to decide on the exact location of the transmission zeros for  application and determination of the most suitable solution. Moreover, to the best of our knowledge, the available solutions in the literature \textit{cannot} cope with the problem of unknown input reconstruction for systems having transmission zeros on the unit circle. \underline{Second},  our solution yields an  estimate of the unknown inputs (i.e., faults)  by   using \underline{only} the system measurements directly (that is, in one single operation)  as it  eliminates the conventional intermediary step of the state estimation process. This is a \textit{significant} improvement and extension from the current practices in the literature for linear systems inversion. Moreover, the reconstruction is rendered through a feedback loop. Both of the proposed schemes  have significantly contributed to and provide robustness subject to presence of  measurement noise. \underline{Third}, our scheme allows relaxation of several restrictive assumptions such as the controllability condition or certain rank conditions that are imposed on the system matrices. However, these advantages come with certain conditions. Specifically, our proposed solution yields unbiased estimation for certain categories of unknown inputs such as step or ramp signals which in fact cover a wide-range of real life phenomena such as faults.}\\

Faults have been modeled in various forms in the literature as either additive  or multiplicative. The proper choice depends on the actual characteristics of the fault. Typically, sensor bias, actuator bias and actuator loss of effectiveness (LOE) are considered as additive faults. Multiplicative fault models are more suitable for representing changes in the system dynamic parameters such as gains and time constants (\cite{patton2013issues}). Moreover,  additive faults are typically considered as LOE that are represented by step-wise or linearly varying (ramp-wise) inputs that are injected to the system. { Our proposed solution perfectly suits}  estimation of step-wise or ramp-wise additive faults that cover a wide range of faults in real life applications.\\

The remainder of the paper is organized as follows. First, the two problems that are considered in this paper are formally stated and defined in Section \ref{sec: problems}. The definitions and notations that are used throughout the paper are provided in Section \ref{sec:notation}. The proposed solution for a stable inversion of linear systems is presented in Section \ref{sec:inversion-method}. The adoption of the proposed inversion method for solving the fault estimation problem  is introduced and developed in Section \ref{sec:fault-estimation}. Finally, numerical simulations and case studies are included in Section \ref{sec:sims}.

\section{Problem Statement}\label{sec: problems}

In this paper, we consider \textit{two} problems as described and formally presented below.
\subsection{Problem 1: Inversion-based input estimation of discrete-time linear systems}

Consider the dynamics of a given linear time-invariant (LTI) discrete-time system that is governed by,

\begin{equation}
{\mathbf{S}:\left\lbrace \begin{array}{l} x(k+1)=Ax(k)+Bu(k)+w(k) \\ y(k)=Cx(k)+Du(k)+v(k)
\end{array} \right.}
\end{equation}
where $x \in \mathbb{R}^n$, $u \in \mathbb{R}^m$ and $y \in \mathbb{R}^l$, where the state $x(t)$ and the input $u(t)$ are assumed to be un-measurable and unavailable. {Moreover, $w (k) \in \mathbb{R}^n$ and $v(k) \in \mathbb{R}^l$ are white noise having zero mean and covariance matrices}
\begin{equation}
{\mathbf{E}[\left[\begin{array}{c} w_i\\v_i\end{array}\right]\left[\begin{array}{cc} w_j^T&v_j^T\end{array}\right]]=\left[\begin{array}{cc} Q&O\\O^T&R\end{array}\right]\delta_{i,j}}
\end{equation}

The main objective that is pursued here is to estimate the unknown sequence ${u(k)}$ from the generated, and the \textit{only} known and available sequence ${y(k)}$ under the following general assumption.
\begin{itemize}
\item  [ ]\textbf{Assumption A}: It is assumed that,
\begin{enumerate}
 \item {The system $\mathbf{S}$ is observable, and}
\item  {At least one of the matrices $B$ or $D$ is full column rank.}
\end{enumerate}
\end{itemize}
In other words, one of the matrices $B$ and $D$ can be rank-deficient or identically zero, but both cannot be simultaneously zero or rank deficient. The other required conditions and assumptions will be given under each result that we will be developing subsequently. We address a solution to the above problem  in Section \ref{sec:inversion-method}.

\subsection{Problem 2: Inversion-based fault estimation of discrete-time linear systems}
Consider a faulty LTI discrete-time system that is given by,

\begin{equation}\label{eq:fault-sys}
\mathbf{S}^f:\left\lbrace \begin{array}{l} x(k+1)=Ax(k)+Bu(k)+Lf(k)+w(k) \\ y(k)=Cx(k)+Du(k)+Ef(k)+v(k)
\end{array} \right.
\end{equation}
where $x \in \mathbb{R}^n$, $u \in \mathbb{R}^m$, $y \in \mathbb{R}^l$ and the input $f \in \mathbb{R}^p$ denotes the fault signal. {Moreover, $w (k) \in \mathbb{R}^n$ and $v(k) \in \mathbb{R}^l$ are white noise having zero mean}. The problem that is considered here is to construct an estimate of the fault signal, i.e. $\hat{f}(k)$, by \textit{only} utilizing the available information from the system, namely $y(k)$ and $u(k)$, under the following assumption.
\begin{itemize}
\item [ ] \textbf{Assumption B}: It is assumed that,
\begin{enumerate}
\item The system $\mathbf{S}^{f}$ is observable, and
\item {At least one of the matrices $L$ or $E$ is full column rank.}
\end{enumerate}
\end{itemize}
The solution to the above problem is discussed and provided subsequently in Section \ref{sec:fault-estimation}.
\section{Notations}\label{sec:notation}

Let us consider the \textit{Rosenbrock System Matrix} that is given by,
\begin{equation}
M_R(z)=\left[\begin{array}{cc}zI-A&B\\-C&D\end{array}\right]
\end{equation}
where if $rank(M_R(z))< n+min (l,m)$, then $z$ is called a \textit{transmission zero} or an \textit{invariant zero} of the system $\mathbf{S}$. Similarly, if  the rank of the following matrix $M_f(z)$ is reduced at a particular value of $z$, the specific zero is designated as the transmission zero of the \textit{fault-to-output} dynamics, where
\begin{equation}
M_f(z)=\left[\begin{array}{cc}zI-A&L\\-C&E\end{array}\right]
\end{equation}

{The vectors $\mathbf{U}(k-2M)$ , $\mathbf{W}(k-2M)$ , $\mathbf{V}(k-2M)$,  $\mathbf{F}(k-2M)$ and $\mathbf{Y}(k-2M)$ that are directly and specifically constructed from the input $u(k)$, process noise $w(k)$, measurement noise $v(k)$, fault $f(k)$ or the output $y(k)$  signals and will be used throughout the paper are defined as follows,}
\begin{equation}\label{eq:u-y-vec-def}
\mathbf{U}(k-2M)=\left[\begin{array}{c}u(k-2M) \\ u(k-2M+1) \\ \vdots \\ u(k-1)
\end{array}  \right]
\end{equation}
{where $M \in \mathbb{N}$ and is selected to be equal or greater than $n$ ($M \geq n$), i.e. the order of the system $\mathbf{S}$. The vectors $\mathbf{F}(k-2M)$, $\mathbf{W}(k-2M)$, $\mathbf{V}(k-2M)$ and $\mathbf{Y}(k-2M)$ are similarly constructed by replacing $u(k)$ in (\ref{eq:u-y-vec-def}) with $f(k)$, $w(k)$, $v(k)$ and $y(k)$, respectively.}

The above input and output vectors satisfy the following relationship,
\begin{multline}\label{eq:lump-sub-rel}
{\mathbf{Y}(k-2M)=\mathbf{C}x(k-2M)+\mathbf{D}\mathbf{U}(k-2M)}\\
{+\mathbf{G}\mathbf{W}(k-2M)+\mathbf{V}(k-2M)}
\end{multline}
where,
\begin{multline}\label{eq:cm-dm-def}
\mathbf{C}=\left(\begin{array}{c}C\\CA\\ \vdots \\ CA^{2M-1}\end{array}\right);  \\ \mathbf{D}=\left(\begin{array}{cccc} D&0& \ldots & 0\\ CB& D& \ldots &0 \\ \vdots & \vdots & \vdots & \vdots \\ CA^{2M-1}B&CA^{2M-2}B&\ldots & D\end{array}\right);\\
\mathbf{G}=\left(\begin{array}{ccccc} 0&0& \ldots & 0&0\\ C& 0& \ldots &0 &0\\ \vdots & \vdots & \vdots & \vdots&\vdots \\ CA^{2M-1}&CA^{2M-2}&\ldots &C& 0\end{array}\right)
\end{multline}

Give a matrix $\mathcal{A}$, then $\mathcal{A}^\perp$, $\mathcal{A}^T$ and $\mathcal{N}(\mathcal{A})$ denote the orthogonal space, the transpose, and the null space of $\mathcal{A}$, respectively. We extensively use the concept of \textit{Moore Penrose pseudo inverse}.  If $\mathcal{A}$ is full row rank, then we denote its \textit{pseudo inverse} by $\mathcal{A}^\dagger$, and compute it by $\mathcal{A}^T(\mathcal{A}\mathcal{A}^T)^{-1}$. Similarly, if $\mathcal{A}$ is full column rank, then we also denote the \textit{pseudo inverse} by $\mathcal{A}^\dagger$, and compute it by $(\mathcal{A}^T\mathcal{A})^{-1}\mathcal{A}^T$. If $\mathcal{A}$ is rank deficient, then we denote the pseudo inverse  by $\mathcal{A}^+$, where $\mathcal{A}^+$ is a matrix that satisfies the following conditions: 1) $\mathcal{A}\mathcal{A}^+\mathcal{A}=\mathcal{A}$, 2) $\mathcal{A}^+\mathcal{A}\mathcal{A}^+=\mathcal{A}^+$, 3) $(\mathcal{A}\mathcal{A}^+)^T=\mathcal{A}\mathcal{A}^+$, and 4) $(\mathcal{A}^+\mathcal{A})^T=\mathcal{A}^+\mathcal{A}$.
If $U\Sigma V^T$ denotes the SVD decomposition of $\mathcal{A}$, then
$\mathcal{A}^+$ is given by $V\Sigma^+U^T$, where $\Sigma^+$ is obtained by
reciprocating each non-zero diagonal element of $\Sigma$.

\section{The proposed inversion-based input estimation of  linear systems}\label{sec:inversion-method}
Our main strategy is to construct $\mathbf{D}\mathbf{U}(k-2M) \in \mathbb{R}^{2Ml}$ by using its projections onto two linearly independent subspaces. First, we identify these subspaces. Next, we will show that the projection of $\mathbf{D}_{}\mathbf{U}_{}(k-2M)$ onto one of these subspaces is directly and simply given by multiplying $\mathbf{Y}_{}(k-2M)$ by a gain. We denote this projection by $\mathbf{U}_{}^{aux}$. Next, we establish an important result that $\mathbf{D}_{}(\mathbf{U}_{}-\mathbf{U}_{}^{aux})$ is zero if the system $\mathbf{S}$ does not have any transmission zeros. Otherwise, computation of the other projection requires that one constructs a dynamical filter. We will identify, specify and characterize this filter and its properties subsequently. Specifically, we will show how the stability condition of this filter is affected by the location of the invariant zeros of the system $\mathbf{S}$.

\subsection{Linear systems with no invariant zeros }
{Let us define the matrix $\mathbf{H}_{}$ as follows,}
\begin{equation}\label{eq:H-org}
\mathbf{H}^T_{}=(\mathbf{C}_{}^T)^\perp
\end{equation}
{Note that since $\mathbf{S}$ is observable as per Assumption A(1), any vector in $\mathbb{R}^{2Ml}$ can be written as a combination of the $\mathbf{C}_{}$ columns and the $\mathbf{H}_{}$ rows. The dot product of the rows of $\mathbf{H}_{}$ with the columns of $\mathbf{D}_{}\mathbf{U}_{}(k-2M)$ is directly given by}
\begin{equation}\label{eq:hy-gu}
{\mathbf{H}_{} \mathbf{D}_{} \mathbf{U}_{}(k-2M) =\mathbf{H}_{} \mathbf{Y}_{}(k-2M)+\mathbf{H}\mathbf{G}\mathbf{W}(k-2M)+\mathbf{H}_{}\mathbf{V}(k-2M)}
\end{equation}

{The matrix $\mathbf{H}_{} \mathbf{D}_{}$ is not a full rank matrix in general. Moreover, the terms $\mathbf{HGW}$ and $\mathbf{V}$ are not known, hence one cannot reconstruct $\mathbf{U}_{}(k-2M)$ from equation (\ref{eq:hy-gu}). To address this challenge, let us determine another input, namely $\mathbf{U}^{aux}_{}(k-2M)$ (designated as the \textit{auxiliary input}), that is obtained by solving the following optimization problem,}
\begin{equation}
\min_{\mathbf{U}^{aux}_{}}\|\mathbf{H}_{} \mathbf{Y}_{}(k-2M)-\mathbf{H}_{}\mathbf{D}_{}\mathbf{U}^{aux}_{}(k-2M)\|
\end{equation}
The solution to the above minimization problem is given by,
\begin{equation}
\mathbf{U}^{aux}_{}(k-2M)=\mathbf{K}_1\mathbf{Y}_{}(k-2M)
\end{equation}
where,
\begin{equation}
\mathbf{K}_1=(\mathbf{H}_{}\mathbf{D}_{})^+\mathbf{H}_{}
\end{equation}
In general, it should be noted that $\mathbf{D}_{}\mathbf{U}^{aux}_{}(k-2M)$ is the construction of $\mathbf{D}_{}\mathbf{U}_{}(k-2M)$ onto the row space of $\mathbf{H}_{}$. Moreover, if the system $\mathbf{S}$ does not have any transmission zeros, then the first $2Ml-n$ rows of $\mathbf{U}_{}(k-2M)$ and $\mathbf{U}_{}^{aux}(k-2M)$ are equal as shown in the following theorem. However, we need to first state the following lemma.
\begin{lemma}\label{lm:square-d-rank}
Let Assumptions A(1) and A(2) hold, $l \geq m$, and $M \geq n$. If the system $\mathbf{S}$ has no transmission zeros, then $rank(\mathbf{D}_{}) \geq 2Mm-n$. The equality holds for square systems, namely when $l=m$.
\end{lemma}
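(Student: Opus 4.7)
The plan is to convert the rank bound into a null-space bound. Since $\mathbf{D}_{2M}$ has $2Mm$ columns, $\mathrm{rank}(\mathbf{D}_{2M}) \ge 2Mm - n$ is equivalent to $\dim \mathcal{N}(\mathbf{D}_{2M}) \le n$, and the square-case equality corresponds to $\dim \mathcal{N}(\mathbf{D}_{2M}) = n$. I would open by invoking the dynamical meaning of the null space: by (\ref{eq:lump-sub-rel}) applied with $x(k-2M) = 0$, a vector $\mathbf{U}_{2M}$ lies in $\mathcal{N}(\mathbf{D}_{2M})$ iff its entries form an input sequence of length $2M$ that drives $\mathbf{S}$ from the zero initial state while producing zero outputs over the entire horizon.

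For the upper bound I would introduce the linear map $\Phi : \mathcal{N}(\mathbf{D}_{2M}) \to \mathbb{R}^n$ sending each admissible $\mathbf{U}_{2M}$ to the terminal state reached after $2M$ steps, and prove that $\Phi$ is injective. If $\Phi(\mathbf{U}) = \Phi(\mathbf{U}')$, the difference $\mathbf{V} = \mathbf{U} - \mathbf{U}'$ (with component inputs $v(k)$) produces a zero-output trajectory with matching zero initial and terminal states; extending $v(k) = 0$ for $k \ge 2M$ preserves $x \equiv 0$ and $y \equiv 0$ thereafter, yielding an infinite input that generates identically zero output from zero initial state. The no-invariant-zero hypothesis forbids any nonzero such sequence. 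First, full column rank of $M_R(z)$ at every $z$ forces $\ker B \cap \ker D = \{0\}$: otherwise a nonzero $u$ in the intersection would satisfy $M_R(z)\binom{0}{u} = 0$ for every $z$ and make every complex number an invariant zero. Second, standard linear-systems geometry then gives that the weakly unobservable subspace is trivial, so $x(k) \equiv 0$ along the extended trajectory, and the state and output equations force $Bv(k) = 0 = Dv(k)$ and hence $\mathbf{V} = 0$. Injectivity of $\Phi$ yields $\dim \mathcal{N}(\mathbf{D}_{2M}) \le n$, which is the inequality.

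For equality when $l = m$ I would establish that $\Phi$ is also surjective; combined with injectivity this gives $\dim \mathcal{N}(\mathbf{D}_{2M}) = n$ and the claimed rank identity. In the square case, full rank $n + l$ of $M_R(z)$ at every $z$ simultaneously entails observability and controllability of $\mathbf{S}$ together with invertibility of the transfer matrix $G(z)$, so $\mathbf{S}$ admits a finite-dimensional inverse system of order $n$. For an arbitrary target $\xi \in \mathbb{R}^n$, I would construct $\mathbf{U} \in \mathcal{N}(\mathbf{D}_{2M})$ producing terminal state $\xi$ by concatenating an initial zero-input segment (that keeps the state at zero) with a final tail block of length of order $n$ whose contribution to the outputs on $[0, 2M-1]$ is cancelled by the inverse dynamics while its contribution to the terminal state is $\xi$; the budget $2M \ge 2n$ provides enough horizon for both segments. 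The main obstacle is precisely this simultaneous solvability of the zero-output and state-target constraints, which must be extracted carefully from the square/no-zero/invertible-$G$ structure (for instance, via the fact that for square minimal systems with no finite zeros, the orders of the zeros at infinity sum to the McMillan degree $n$). By contrast, the upper bound becomes routine once the infinite-extension trick unlocks the no-invariant-zero hypothesis.
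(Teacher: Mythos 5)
Your route is genuinely different from the paper's. The paper obtains the lower bound $rank(\mathbf{D}_{2M})\ge 2Mm-n$ structurally: the absence of transmission zeros forces a full-rank Markov parameter no later than $CA^{n-1}B$, and the block-Toeplitz form of $\mathbf{D}_{2M}$ then makes its first $2Mm-n$ columns linearly independent. You instead prove $\dim\mathcal{N}(\mathbf{D}_{2M})\le n$ by showing the terminal-state map $\Phi$ is injective on the null space. That half of your argument is sound in substance --- a finite-support input producing identically zero output from the zero state must vanish when the Rosenbrock pencil has full column rank everywhere --- but your justification is miscalibrated: triviality of the weakly unobservable subspace does \emph{not} follow from $\ker B\cap\ker D=\{0\}$ alone (take $D$ invertible and a system with a finite zero); it needs the full no-zero hypothesis, e.g.\ via $\mathcal{V}^{*}=\mathcal{R}^{*}$ (no invariant zeros) combined with $\mathcal{R}^{*}=0$ (left invertibility). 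A cleaner shortcut for your injectivity step is to observe that the zero-extended input has a transform that is polynomial in $z^{-1}$, hence rational, and is annihilated by a transfer matrix of full normal column rank.

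The genuine gap is in the equality half. You reduce it to surjectivity of $\Phi$, i.e.\ to the claim that every state of a square, observable, zero-free system is reachable from the origin along an output-nulling trajectory of length $2M$ (equivalently $\mathcal{W}^{*}=\mathbb{R}^{n}$), and you explicitly leave the ``simultaneous solvability of the zero-output and state-target constraints'' unresolved; the concatenation sketch does not establish it. The fact is true (for square invertible systems $\mathcal{V}^{*}\oplus\mathcal{W}^{*}=\mathbb{R}^{n}$, and $\mathcal{V}^{*}=0$ here), but it is the heaviest fact in your chain and carries all the weight of the equality. The paper sidesteps it with a counting argument you could adopt: $\left[\begin{array}{cc}\mathbf{C}_{2M} & \mathbf{D}_{2M}\end{array}\right]$ has only $2Ml=2Mm$ rows, and the column spaces of $\mathbf{C}_{2M}$ and $\mathbf{D}_{2M}$ intersect trivially, since a nonzero common vector would give a nonzero initial state whose output is nulled over $2M\ge 2n$ steps, which is impossible for an observable system without transmission zeros; hence $n+rank(\mathbf{D}_{2M})\le 2Mm$. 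This yields the reverse inequality directly, with no surjectivity needed.
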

\begin{proof}
Proof is provided in the \textbf{Appendix \ref{app:square-d-rank}}.
\end{proof}
Lemma \ref{lm:square-d-rank} implies that for square systems, as the number of transmission zeros increases, the rank of $\mathbf{D}_{}$ will consequently increase. In other words, $\mathbf{C}_{}$ and $\mathbf{D}_{}$ will have more linearly dependent columns which allows the injection of a nonzero input for zeroing out the output. This fact is also reflected when the problem of decoupling state estimation process from the unknown input is considered.
\begin{theorem}\label{thm:u-uaux}
{Let Assumptions A(1) and A(2) hold, $l \geq m$, and $M \geq n$. If the system $\mathbf{S}$ has no transmission zeros, then at least the first $2Mm-n$ rows of $\mathbb{E}(\mathbf{U}_{}(k-2M)$ - $\mathbf{U}_{}^{aux}(k-2M))$ are zero.}
\end{theorem}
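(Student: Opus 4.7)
The plan is to interpret $\mathbf{U}_{2M}^{aux}(k-2M)$ as an orthogonal projection of $\mathbf{U}_{2M}(k-2M)$ and then use Lemma~\ref{lm:square-d-rank} together with the absence of transmission zeros to confine the residual to the last $n$ scalar coordinates. Substituting (\ref{eq:lump-sub-rel}) into the defining formula for $\mathbf{U}_{2M}^{aux}$ and applying $\mathbf{H}_{2M}\mathbf{C}_{2M}=0$ (which is immediate from $\mathbf{H}_{2M}^{T}=(\mathbf{C}_{2M}^{T})^{\perp}$), one obtains $\mathbf{U}_{2M}^{aux}=(\mathbf{H}_{2M}\mathbf{D}_{2M})^{+}(\mathbf{H}_{2M}\mathbf{D}_{2M})\mathbf{U}_{2M}=P\,\mathbf{U}_{2M}$, where $P$ is the orthogonal projector onto the row space of $\mathbf{H}_{2M}\mathbf{D}_{2M}$. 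Consequently $\mathbf{U}_{2M}-\mathbf{U}_{2M}^{aux}=(I-P)\mathbf{U}_{2M}$ is the orthogonal projection of $\mathbf{U}_{2M}$ onto $\mathcal{N}(\mathbf{H}_{2M}\mathbf{D}_{2M})$.

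It then suffices to prove the structural inclusion $\mathcal{N}(\mathbf{H}_{2M}\mathbf{D}_{2M})\subseteq\mathrm{span}(e_{2Mm-n+1},\ldots,e_{2Mm})$, since any orthogonal projection into such a subspace is automatically zero in its first $2Mm-n$ rows. Using $\mathcal{N}(\mathbf{H}_{2M})=\mathrm{Range}(\mathbf{C}_{2M})$, a vector $u$ sits in $\mathcal{N}(\mathbf{H}_{2M}\mathbf{D}_{2M})$ precisely when $\mathbf{D}_{2M}u\in\mathrm{Range}(\mathbf{C}_{2M})$, and hence precisely when there exists $x^{*}\in\mathbb{R}^{n}$ such that $(x^{*},-u)$ is a zero-output trajectory of $\mathbf{S}$ over the $2M$-step window. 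The question is thereby reduced to a structural statement about finite-horizon zero-output trajectories of an observable, transmission-zero-free system.

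For the dimensional count I will combine observability ($\mathrm{rank}(\mathbf{C}_{2M})=n$), Lemma~\ref{lm:square-d-rank} ($\mathrm{rank}(\mathbf{D}_{2M})\geq 2Mm-n$), and the absence of invariant zeros. The no-transmission-zeros hypothesis forces the weakly unobservable subspace of $\mathbf{S}$ to be trivial, so any zero-output trajectory on a horizon of at least $2n$ samples must originate from $x^{*}=0$; this yields $\mathrm{Range}(\mathbf{C}_{2M})\cap\mathrm{Range}(\mathbf{D}_{2M})=\{0\}$, whence $\mathrm{rank}(\mathbf{H}_{2M}\mathbf{D}_{2M})=\mathrm{rank}(\mathbf{D}_{2M})\geq 2Mm-n$ and $\mathcal{N}(\mathbf{H}_{2M}\mathbf{D}_{2M})=\mathcal{N}(\mathbf{D}_{2M})$, a subspace of dimension at most $n$.

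The principal obstacle is \emph{locating} this at-most-$n$-dimensional null space inside the last $n$ standard coordinates, equivalently proving that the first $2Mm-n$ columns of $\mathbf{D}_{2M}$ are linearly independent. I plan to exploit the block-lower-triangular Toeplitz form of $\mathbf{D}_{2M}$ in (\ref{eq:cm-dm-def}): reading the condition $\mathbf{D}_{2M}u=0$ top-down, any early non-zero entry of $u$ propagates through the Markov parameter sequence $D,\,CB,\,CAB,\ldots$ into a pattern of supports that cannot be cancelled by later entries without producing an invariant zero of $\mathbf{S}$. An inductive application of this forward-propagation observation, leaning on observability over $M\geq n$ steps and the same Rosenbrock rank bookkeeping that underlies Lemma~\ref{lm:square-d-rank}, is expected to confine every element of $\mathcal{N}(\mathbf{D}_{2M})$ to the trailing $n$ scalar slots. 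Making the cancellation count quantitative, i.e., verifying that the absorption budget contributed by the last $n$ columns of $\mathbf{D}_{2M}$ is exactly $n$-dimensional and confined to those slots, is the delicate technical step.
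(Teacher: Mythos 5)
Your reductions up to the last step are correct and essentially retrace the paper's argument in a cleaner language: the identity $\mathbf{U}_{2M}^{aux}=(\mathbf{H}_{2M}\mathbf{D}_{2M})^{+}(\mathbf{H}_{2M}\mathbf{D}_{2M})\mathbf{U}_{2M}$ replaces the paper's dummy state $z(k-2M)$ and augmented measurement equation, and your weakly-unobservable-subspace argument for $\mathrm{Range}(\mathbf{C}_{2M})\cap\mathrm{Range}(\mathbf{D}_{2M})=\{0\}$ actually justifies more carefully what the paper simply asserts ("the columns of $\mathbf{C}_{2M}$ and $\mathbf{D}_{2M}$ are linearly independent"). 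After that, both you and the paper are left with the same decisive claim: $\mathcal{N}(\mathbf{D}_{2M})\subseteq\mathrm{span}(e_{2Mm-n+1},\ldots,e_{2Mm})$, equivalently that the first $2Mm-n$ columns of $\mathbf{D}_{2M}$ are linearly independent.

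That is exactly where your proposal stops being a proof: you only announce a plan ("an inductive application \ldots is expected to confine every element of $\mathcal{N}(\mathbf{D}_{2M})$ to the trailing $n$ slots") and yourself flag the cancellation count as "the delicate technical step." This is not a routine detail that can be waved through: Lemma \ref{lm:square-d-rank} only gives $\mathrm{rank}(\mathbf{D}_{2M})\geq 2Mm-n$ and says nothing about \emph{which} columns realize that rank, and the forward-propagation induction you sketch is precisely where the difficulty lies. For instance, when $m>1$ and $D$ is rank deficient, a zero column of $D$ reproduces a zero (or dependent) column of $\mathbf{D}_{2M}$ well inside the first $2Mm-n$ positions, even though the no-transmission-zeros hypothesis and the rank bound of Lemma \ref{lm:square-d-rank} both hold; so the claimed column independence does not follow from the ingredients you have assembled and needs a genuine argument (the paper discharges it by appealing to the structure of the Markov parameters behind Lemma \ref{lm:square-d-rank}, in particular the assertion that $CA^{n-1}B$ is full rank and occupies the lower-triangular staircase ahead of column $2Mm-n+1$). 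Until you supply that argument, the proof is incomplete at the one step that carries the theorem's content.
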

\begin{proof}
Proof is provided in the \textbf{Appendix \ref{app:u-uaux}}.
\end{proof}
Theorem \ref{thm:u-uaux} implies that the unknown input for the system $\mathbf{S}$ having no transmission zeros can be algebraically reconstructed from the measurements.

\subsection{Minimum phase linear systems}

Let us define an \textit{augmented system} $\mathbf{S}^{aug}$ that is governed by,
\begin{equation}\label{eq:s-aug}
\small{
\mathbf{S}^{aug} : \left\lbrace \begin{array}{l} x(k-2M+1)=Ax(k-2M)+B\mathbf{I}_p\mathbf{U}(k-2M)+w(k) \\ \mathbf{Y}(k-2M)=\mathbf{C}x(k-2M)+\mathbf{D}\mathbf{U}_{}(k-2M)\\+\mathbf{G}\mathbf{W}(k-2M)+\mathbf{V}(k-2M)
\end{array} \right.}
\end{equation}
where $\mathbf{I}_p$ is defined according to,
\begin{equation}
\mathbf{I}_p=\left[\begin{array}{cc} \mathbf{I}_{m \times m} & \mathbf{0}_{m \times (2Mm-m)}
\end{array}\right]
\end{equation}
The systems $\mathbf{S}^{aug}$ and $\mathbf{S}$ have the same states, i.e. $x(k)$ subject to $2M$ time delays. Let us also define a dummy state variable $z(k-2M)$ that satisfies the following relationship,
\begin{equation}\label{eq:y-um}
 \mathbf{Y}_{}(k-2M)=\mathbf{C}_{}z(k-2M)+\mathbf{D}_{}\mathbf{U}_{}^{aux}(k-2M)
\end{equation}
The variable $z(k-2M)$ that satisfies the above equation exists since $\mathbf{Y}_{}(k-2M)-\mathbf{D}_{}\mathbf{U}_{}^{aux}(k-2M)$ belongs to the column space of $\mathbf{C}_{}$, and $\mathbf{Y}_{}(k-2M)$ and $\mathbf{U}_{}^{aug}(k-2M)$ are known at each time step. Consequently, $z(k-2M)$ is known and is given by,
\begin{equation}\label{eq:z-comp}
z(k-2M)=\mathbf{C}_{}^\dagger (\mathbf{Y}_{}(k-2M)-\mathbf{D}_{}\mathbf{U}_{}^{aux}(k-2M))
\end{equation}
Note that the variable $z(k)$ is not governed by the dynamics of $x(k)$ except when the system $\mathbf{S}$ does not have any transmission zeros as shown in the proof of Theorem \ref{thm:u-uaux}. In general, $z(k+1)\neq Az(k)+Bu(k)$. In fact the difference between the dynamics of $x(k)$ and $z(k)$ represents the zero dynamics of the system as we will show  subsequently.

Let us define the difference between the two variables as a state error according to,
\begin{equation}\label{eq:e-def}
e(k)=x(k-2M)-z(k-2M)
\end{equation}
Let Assumptions A(1) and A(2) hold, $l \geq m$, and $M \geq n$. The dynamics associated with the state error (\ref{eq:e-def}) is now given by,
{
\begin{multline}\label{eq:state-error-dyn}
e(k+1)=(A-B\mathbf{I}_p\mathbf{D}_{}^+\mathbf{C}_{})e(k) \\ -\left[\begin{array}{ccc}\mathbf{I} & -A& -B\mathbf{I}_p
\end{array}\right]\left[\begin{array}{c}z(k-2M+1)\\z(k-2M)\\\mathbf{U}^{aux}_{}(k-2M)
\end{array}\right]\\
-\mathbf{D}^+(\mathbf{G}\mathbf{W}(k-2M)+w(k-2M)+\mathbf{V}(k-2M))
\end{multline}}
{The above follows given the definition of $e(k)$ as per equation (\ref{eq:e-def}). In other words, we have}
{
\begin{eqnarray}
e(k+1) &=& x(k-2M+1)-z(k-2M+1) \nonumber \\
&=& Ax(k-2M)+B\mathbf{I}_P\mathbf{U}_{}(k-2M)+w(k-2M) \nonumber \\
&-&z(k-2M+1) \nonumber \\
&=& Ae(k)+Az(k-2M)+B\mathbf{I}_P\delta\mathbf{U}_{}(k)\nonumber \\ &+&B\mathbf{I}_P\mathbf{U}_{}^{aux}(k-2M)-z(k-2M+1)+\mathcal{ST} \nonumber \\
&=& (A-B\mathbf{I}_p\mathbf{D}_{}^+\mathbf{C}_{})e(k) \nonumber \\ &-&\left[\begin{array}{ccc}\mathbf{I} & -A& -B\mathbf{I}_p
\end{array}\right]\left[\begin{array}{c}z(k-2M+1)\\z(k-2M)\\\mathbf{U}^{aux}_{}(k-2M)
\end{array}\right]+ \mathcal{ST}. \nonumber \\
\end{eqnarray}}
{where $\mathcal{ST} = -\mathbf{D}^+(\mathbf{G}\mathbf{W}(k-2M)+\mathbf{V}(k-2M)-w(k-2M))$.}

It should be noted that the poles associated with the dynamics (\ref{eq:state-error-dyn}) include the transmission zeros of the system $\mathbf{S}$  for a square system. More specifically, we can state the following result.
\begin{theorem}\label{thm:zeros-of-a-bipkc}
Let Assumptions A(1) and A(2) hold, $l = m$, and $M \geq n$. Let $\mathcal{V}=\{v_i|i=1,..,p\}$ denote the set of the system $\mathbf{S}$ invariant zeros. Let $\mathcal{O}=\{0, \ldots ,0\}$, that contains $n-p$ zeros. The eigenvalues of $(A-B\mathbf{I}_p\mathbf{D}_{}^+\mathbf{C}_{})$ are then given by $\mathcal{V} \cup \mathcal{O}$.
\end{theorem}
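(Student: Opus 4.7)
The plan is to interpret $A_{cl} := A - B\mathbf{I}_p \mathbf{D}_{2M}^+ \mathbf{C}_{2M}$ as the closed-loop state transition matrix produced by the feedback $u(k) = -Kx(k)$ with $K := \mathbf{I}_p \mathbf{D}_{2M}^+ \mathbf{C}_{2M}$, i.e., the first $m$ rows of the minimum-norm input sequence that matches $\mathbf{D}_{2M}\mathbf{U} \approx \mathbf{C}_{2M} x$ over a horizon of $2M$ steps. The target is to show that the $n$ eigenvalues of $A_{cl}$ split into the $p$ invariant zeros in $\mathcal{V}$ (from the zero-dynamics directions) and $n-p$ eigenvalues at the origin (from a nilpotent complementary subspace).

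For the $p$ eigenvalues in $\mathcal{V}$, I would pick any $\lambda \in \mathcal{V}$ together with a nontrivial right null pair $(v^T, w^T)^T$ of the Rosenbrock pencil $M_R(\lambda)$, so that $Av = \lambda v + Bw$ and $Cv = Dw$. A short induction on the block-row index (using $A^j v = \lambda^j v + \sum_{k=0}^{j-1}\lambda^{j-1-k} A^k B w$, a direct consequence of $Av = \lambda v + Bw$) yields the key identity
\[
\mathbf{C}_{2M}\,v \;=\; \mathbf{D}_{2M}\,\mathbf{W}_{2M}(\lambda)\,w, \qquad \mathbf{W}_{2M}(\lambda) := \mathrm{col}\bigl(I,\;\lambda I,\;\ldots,\;\lambda^{2M-1}I\bigr),
\]
exhibiting $\mathbf{W}_{2M}(\lambda)w$ as a particular solution of $\mathbf{D}_{2M}\mathbf{U} = \mathbf{C}_{2M}v$. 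Consequently $\mathbf{D}_{2M}^+\mathbf{C}_{2M}v = \mathbf{W}_{2M}(\lambda)w - \mathbf{n}$ for some $\mathbf{n}\in\mathcal{N}(\mathbf{D}_{2M})$, and extracting the first $m$ rows gives $K v = w - \mathbf{I}_p\mathbf{n}$. Using the structural fact that null-space sequences of $\mathbf{D}_{2M}$ correspond to inputs producing zero output from the zero initial state (whose first block is tightly constrained by the earliest Markov-parameter condition $D n_0 = 0$ and the subsequent cancellations forced by the lower-triangular Toeplitz structure), I would argue that $\mathbf{I}_p\mathbf{n} = 0$ on this particular direction, hence $K v = w$ and therefore $A_{cl} v = A v - B w = \lambda v$.

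For the remaining $n - p$ eigenvalues (the set $\mathcal{O}$), I would produce an $A_{cl}$-invariant subspace $\mathcal{Z}$ of dimension $n-p$, complementary to $\mathrm{span}\{v_1,\ldots,v_p\}$, on which $A_{cl}$ acts nilpotently. For $\xi \in \mathcal{Z}$, since $\xi$ lies outside the span of the Rosenbrock null directions, the minimum-norm feedback drives the closed-loop trajectory $A_{cl}^j \xi$ to zero in at most $2M$ steps (a deadbeat-like behavior stemming from the finite-memory projection embedded in $\mathbf{D}_{2M}^+$); hence $A_{cl}^{2M}\xi = 0$, and the corresponding $n-p$ eigenvalues all lie at the origin. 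A dimension count then exhausts the full spectrum of $A_{cl}$ and matches $\mathcal{V} \cup \mathcal{O}$.

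The main obstacle lies in the two structural claims: (i) rigorously proving $\mathbf{I}_p\mathbf{n} = 0$ for the specific null-space residual above, which requires a careful description of $\mathcal{N}(\mathbf{D}_{2M})$ in the square case; and (ii) explicitly constructing $\mathcal{Z}$ and proving nilpotency. An elegant alternative that I would expect to bypass much of this bookkeeping is to change coordinates into a Morse-type zero-dynamics canonical form, in which $A_{cl}$ becomes block-triangular with the diagonal blocks being a matrix whose spectrum is exactly $\mathcal{V}$ and a nilpotent block of size $n-p$; reading off the characteristic polynomial then yields the result directly.
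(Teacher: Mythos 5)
Your first half is sound and takes a genuinely different route from the paper. The identity $\mathbf{C}_{2M}v=\mathbf{D}_{2M}\mathbf{W}_{2M}(\lambda)w$ is the standard output-zeroing relation for the augmented maps, and the residual term $\mathbf{I}_p\mathbf{n}$ vanishes not merely ``on this particular direction'' but identically, because the paper's Lemma~\ref{lm:ipd} already gives $\mathbf{I}_p\,\mathcal{N}(\mathbf{D}_{2M})=0$; hence $Kv=w$ and $A_{cl}v=\lambda v$ follow cleanly. Even here, though, note two limitations: the eigenvector argument shows only that each \emph{distinct} invariant zero belongs to the spectrum and says nothing about algebraic multiplicity when zeros repeat, and you would still need the $v_i$ to be linearly independent of whatever accounts for the remaining $n-p$ eigenvalues.

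The genuine gap is the second half. The paper does not construct an invariant subspace at all: it multiplies $|z\mathbf{I}-A+B\mathbf{I}_p\mathbf{D}_{2M}^+\mathbf{C}_{2M}|$ by $|\mathbf{D}_{2M}^+|$ and identifies the product, via the Schur determinant identity, with the determinant of the augmented pencil $\bigl[\begin{smallmatrix} z\mathbf{I}-A & -B\mathbf{I}_p\\ \mathbf{C}_{2M} & \mathbf{D}_{2M}\end{smallmatrix}\bigr]$; partitioning this so that the original Rosenbrock matrix sits in the upper-left corner and the lower-triangular Toeplitz tail $\mathcal{D}^-_{22}$ occupies the remaining block column, the roots split into the transmission zeros (from the Rosenbrock block) and $z=0$ with multiplicity $n-p$ (from the rank deficiency of $\mathcal{D}^-_{22}$ together with a degree count). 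Your proposed substitute --- that the minimum-norm feedback acts in a deadbeat fashion on a complement $\mathcal{Z}$, so $A_{cl}^{2M}\xi=0$ --- is asserted rather than proved: you give no construction of $\mathcal{Z}$, no proof that it is $A_{cl}$-invariant, and no mechanism forcing nilpotency. The pseudoinverse $\mathbf{D}_{2M}^+$ solves a static least-squares problem; it is not a deadbeat controller, so finite-step annihilation does not follow from anything you have written. You flag this yourself as obstacle (ii), and the Morse-form alternative is likewise only sketched. Without this half, both the multiplicity count and the claim that no eigenvalues outside $\mathcal{V}\cup\mathcal{O}$ occur remain unproven.
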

\begin{proof}
Proof is provided in the \textbf{Appendix \ref{app:zeros-of-a-bipkc}}.
\end{proof}
Theorem \ref{thm:zeros-of-a-bipkc} links the zero dynamics of the square system $\mathbf{S}$ to the state error dynamics of (\ref{eq:state-error-dyn}). According to this theorem, if a square system $\mathbf{S}$ is minimum phase, then the state error dynamics (\ref{eq:state-error-dyn}) will be stable. This statement is not generally true for non-square systems, since the state error dynamics (\ref{eq:state-error-dyn}) may have unstable pole(s) even for non-square minimum phase systems.

The state error dynamics is associated with the difference between $\mathbf{U}_{}(k-2M)$ and $\mathbf{U}_{}^{aux}(k-2M)$ as follows. If we define,
\[\delta \mathbf{U}_{}(k)=\mathbf{U}_{}(k-2M)-\mathbf{U}_{}^{aux}(k-2M)\]
and subtract equation (\ref{eq:y-um}) from the measurement equation of the system $\mathbf{S}^{aug}$, one will obtain,
\begin{equation}\label{eq:du-ce}
{\mathbf{D}_{}\delta \mathbf{U}_{}(k)=-\mathbf{C}_{}e(k)- \mathbf{G}\mathbf{W}(k-2M)-\mathbf{V}(k-2M)}
\end{equation}
The dynamics (\ref{eq:state-error-dyn}) along with equation (\ref{eq:du-ce}) can be used to construct an inverse filter for a square minimum phase system as follows. Towards this end, we first provide a definition and present a lemma.
\begin{definition}
{Consider a sequence $u(k)$. We let $\hat{u}(k)$ denote an unbiased estimate of $u(k)$ if $\hat{u}(k) \rightarrow z^{-q}\mathbb{E}(u(k))$ as $k \rightarrow \infty$, where $q \in \mathbb{N}$. Otherwise, it will be designated as a biased estimate of $u(k)$.}
\end{definition}
\begin{lemma}\label{lm:ipd}
Let Assumptions A(1) and A(2) hold, $l \geq m$, and $M \geq n$.  Then it follows that $\mathbf{I}_P.\mathcal{N}(\mathbf{D}_{})=0$.
\end{lemma}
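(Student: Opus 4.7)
The plan is to translate $v\in\mathcal{N}(\mathbf{D}_{2M})$ into a trajectory statement and split on Assumption~A(2). Writing $v = [v_1^T,\ldots,v_{2M}^T]^T$ with $v_k\in\mathbb{R}^m$ and setting $u(k) = v_{k+1}$, the Toeplitz structure of $\mathbf{D}_{2M}$ in (\ref{eq:cm-dm-def}) makes $\mathbf{D}_{2M}v = 0$ equivalent to the zero-initial-state response of $\mathbf{S}$ satisfying $y(0) = \cdots = y(2M-1) = 0$. The first block row immediately gives $Dv_1 = 0$, so under the branch of Assumption~A(2) in which $D$ is of full column rank, $v_1 = 0$ is immediate and the lemma is proved.

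The substantive branch is the one in which $B$ is of full column rank while $D$ is rank-deficient. There $x(1) = Bv_1$, and I would stack the remaining $2M-1$ equations $y(1) = \cdots = y(2M-1) = 0$ as
\begin{equation*}
\mathbf{C}_{2M-1}(Bv_1) + \mathbf{D}_{2M-1}\begin{pmatrix}v_2\\ \vdots \\ v_{2M}\end{pmatrix} = 0,
\end{equation*}
so that, together with $Dv_1 = 0$, the whole condition $\mathbf{D}_{2M}v=0$ becomes
\begin{equation*}
\begin{pmatrix}D\\ \mathbf{C}_{2M-1}B\end{pmatrix} v_1 \;=\; -\begin{pmatrix}0\\ \mathbf{D}_{2M-1}\mathbf{U}'\end{pmatrix}, \qquad \mathbf{U}' = (v_2^T,\ldots,v_{2M}^T)^T.
\end{equation*}
The key structural observation is that the coefficient matrix on the left has full column rank $m$: by Assumption~A(1) together with $2M-1 \geq n$, $\mathbf{C}_{2M-1}$ is column-injective, and composition with a column-injective $B$ preserves injectivity. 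Equivalently, $\mathbf{I}_p$ can in principle be realized as a row combination $Q\mathbf{D}_{2M} = \mathbf{I}_p$, constructed by back-substitution from the terminal relation $Q_{2M}D = 0$ up through $Q_1 D + \sum_{j=2}^{2M} Q_j\, CA^{j-2}B = I_m$.

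The genuinely hard step is disposing of the free parameter $\mathbf{U}'$ on the right-hand side: full column rank of the coefficient matrix alone does not force $v_1 = 0$, because the right-hand side can \emph{a priori} lie in the column range of $[D;\,\mathbf{C}_{2M-1}B]$ for a nonzero $v_1 \in \ker D$. I would close the argument by showing $\operatorname{col}(\mathbf{C}_{2M-1}) \cap \operatorname{col}(\mathbf{D}_{2M-1}) = \{0\}$ under the standing hypotheses — geometrically, ruling out an output-nulling trajectory of length $2M-1 \geq n$ starting from the reachable state $Bv_1 \neq 0$. This is where the left-invertibility structure implicit in Assumptions~A(1)--A(2) has to be invoked, and is by a wide margin the main obstacle. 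Once the trivial-intersection step is in place, projecting the stacked equation onto $\operatorname{col}(\mathbf{D}_{2M-1})^\perp$ yields $\mathbf{C}_{2M-1}Bv_1 = 0$, and the successive injectivity of $\mathbf{C}_{2M-1}$ and $B$ then gives $v_1 = 0$ as required.
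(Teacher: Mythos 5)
Your reduction is fine as far as it goes: reading $\mathbf{D}_{2M}v=0$ as a zero-initial-state trajectory with vanishing output, closing the branch where $D$ has full column rank via $Dv_1=0$, and isolating the remaining condition $\mathbf{C}_{2M-1}Bv_1=-\mathbf{D}_{2M-1}\mathbf{U}'$ with $\left[\begin{smallmatrix}D\\ \mathbf{C}_{2M-1}B\end{smallmatrix}\right]$ column-injective. You also correctly recognize that this alone does not force $v_1=0$. The genuine gap is the step you defer: proving $\operatorname{col}(\mathbf{C}_{2M-1})\cap\operatorname{col}(\mathbf{D}_{2M-1})=\{0\}$ ``under the standing hypotheses'' is not merely the hard part left undone --- it is false in exactly the regime this lemma must serve. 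Trivial intersection of those column spaces is the signature of a system with \emph{no} transmission zeros (it is precisely the fact used in the proof of Theorem \ref{thm:u-uaux}); whereas if $z_0$ is a transmission zero with zero direction $(x_0,u_0)$ and $x_0\neq0$ (Assumption A(2) excludes $x_0=0$), the associated output-nulling trajectory started at $x_0$ gives $\mathbf{C}_{2M-1}x_0=-\mathbf{D}_{2M-1}\mathbf{U}$ for a suitable input stack, a nonzero vector lying in both column spaces. Since Lemma \ref{lm:ipd} is invoked in Theorems \ref{thm:nmp-filter} and \ref{thm:nmp-filter-rmp} precisely for systems that do have transmission zeros (including unstable ones and ones on the unit circle), your proposed closing step would confine the lemma to the zero-free case already covered by Theorem \ref{thm:u-uaux}.

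What the argument actually requires is weaker and more structural: not that the two column spaces meet trivially, but that $\mathbf{C}_{2M-1}Bv_1\in\operatorname{col}(\mathbf{D}_{2M-1})$ with $v_1\in\ker D$ forces $v_1=0$, i.e., that no output-nulling continuation exists from the special reachable states $Bv_1$, $0\neq v_1\in\ker D$ --- a left-invertibility-type property of the first block column of $\mathbf{D}_{2M}$, not of the pair $(\mathbf{C}_{2M-1},\mathbf{D}_{2M-1})$ as a whole. Note also that your parenthetical ``equivalently, $Q\mathbf{D}_{2M}=\mathbf{I}_p$'' is not equivalent to the column-rank observation preceding it; it is equivalent to the lemma's conclusion itself, so as written the sketch momentarily assumes what it sets out to prove. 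By contrast, the paper's own proof is a short, purely algebraic row-space argument: it starts from the same observation you make (the first $m$ columns of $\mathbf{D}_{2M}$ are linearly independent under A(1)--A(2)), and from this infers $\operatorname{rank}\!\left(\left[\begin{smallmatrix}\mathbf{I}_p\\ \mathbf{D}_{2M}\end{smallmatrix}\right]\right)=\operatorname{rank}(\mathbf{D}_{2M})$, i.e., that the rows of $\mathbf{I}_p$ lie in the row space of $\mathbf{D}_{2M}$, which is the statement $\mathbf{I}_p\,\mathcal{N}(\mathbf{D}_{2M})=0$; no trajectory splitting or intersection-of-column-spaces claim is used. As it stands, your proposal is incomplete, and the specific route you indicate for completing it would fail for every non-minimum phase (indeed every zero-possessing) system.
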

\begin{proof}
Proof is provided in the \textbf{Appendix \ref{app:lm-ipd}}.
\end{proof}
We are now in a position to state our next main result.
\begin{theorem}\label{thm:mp-filter}
Let Assumption A(1) and A(2) hold, $l = m$, and $M \geq n$. If the system $\mathbf{S}$ is minimum phase, then the unbiased estimate of the unknown input $u(k-2M)$ is governed by the filter dynamics,
\begin{equation}\label{eq:mp-filter}
\mathbf{S}^{inv} : \left\lbrace \begin{array}{l} \hat{e}(k+1)=(A-B\mathbf{I}_p\mathbf{D}_{}^+\mathbf{C}_{})\hat{e}(k)-B_F\mathcal{U}(k-2M) \\
\hat{\mathbf{U}}_{}(k)=-\mathbf{D}_{}^+\mathbf{C}_{}\hat{e}(k)+\mathbf{U}^{aux}_{}(k-2M) \\
\hat{u}(k)=\mathbf{I}_p \hat{\mathbf{U}}_{}(k)
\end{array} \right.
\end{equation}
where,
\begin{equation}
B_F=\left[\begin{array}{ccc}\mathbf{I} & -A& -B\mathbf{I}_p
\end{array}\right]
\end{equation}
\begin{equation}\label{eq:u-mathcal}
\mathcal{U}(k-2M)=\left[\begin{array}{c}z(k-2M+1)\\z(k-2M)\\\mathbf{U}^{aux}_{}(k-2M)
\end{array}\right].
\end{equation}
where the state $z(k)$ at each time step is given by equation (\ref{eq:z-comp}).
\end{theorem}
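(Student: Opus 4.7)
The plan is to show that the filter state $\hat{e}(k)$ tracks the true state error $e(k)$ asymptotically, and then to verify that when this tracking occurs the output $\mathbf{I}_p \hat{\mathbf{U}}_{2M}(k)$ recovers $u(k-2M)$ exactly, with any residual freedom annihilated by the projector $\mathbf{I}_p$.

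First, I would subtract the true error dynamics of Lemma \ref{lm:state-dyn} from the filter recursion of $\mathbf{S}^{inv}$. Since both share the same state matrix and the same forcing term $B_F \mathcal{U}(k-2M)$ (note that $\mathcal{U}$ depends only on $z$ and $\mathbf{U}^{aux}_{2M}$, which are computable from the measurements via (\ref{eq:z-comp}) and the definition of $\mathbf{K}_1$), the estimation error $\tilde{e}(k) = e(k) - \hat{e}(k)$ evolves as
\begin{equation*}
\tilde{e}(k+1) = (A - B\mathbf{I}_p \mathbf{D}_{2M}^+ \mathbf{C}_{2M})\,\tilde{e}(k).
\end{equation*}
By Theorem \ref{thm:zeros-of-a-bipkc}, the eigenvalues of this matrix are the invariant zeros of the square system $\mathbf{S}$ together with $n-p$ eigenvalues at the origin. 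Since $\mathbf{S}$ is minimum phase by hypothesis, every invariant zero lies strictly inside the unit disk, so the matrix is Schur and $\tilde{e}(k) \to 0$ geometrically.

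Next, I would relate $\hat{\mathbf{U}}_{2M}$ back to the true input. Equation (\ref{eq:du-ce}) asserts $\mathbf{D}_{2M}\,\delta\mathbf{U}_{2M}(k) = -\mathbf{C}_{2M} e(k)$, which in particular shows that $-\mathbf{C}_{2M}e(k)$ lies in the range of $\mathbf{D}_{2M}$. Using the standard Moore--Penrose identity that the general solution of $\mathbf{D}_{2M}\xi = -\mathbf{C}_{2M} e(k)$ is a particular pseudo-inverse solution plus an element of $\mathcal{N}(\mathbf{D}_{2M})$, I can write
\begin{equation*}
\mathbf{U}_{2M}(k-2M) - \mathbf{U}^{aux}_{2M}(k-2M) = -\mathbf{D}_{2M}^+ \mathbf{C}_{2M}\,e(k) + v(k)
\end{equation*}
for some $v(k) \in \mathcal{N}(\mathbf{D}_{2M})$. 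Combining this with the filter output equation $\hat{\mathbf{U}}_{2M}(k) = -\mathbf{D}_{2M}^+ \mathbf{C}_{2M}\hat{e}(k) + \mathbf{U}^{aux}_{2M}(k-2M)$ gives
\begin{equation*}
\mathbf{U}_{2M}(k-2M) - \hat{\mathbf{U}}_{2M}(k) = -\mathbf{D}_{2M}^+ \mathbf{C}_{2M}\,\tilde{e}(k) + v(k).
\end{equation*}

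Finally, I would apply $\mathbf{I}_p$ on the left. The definition of $\mathbf{I}_p$ yields $\mathbf{I}_p \mathbf{U}_{2M}(k-2M) = u(k-2M)$, while Lemma \ref{lm:ipd} kills the null-space term: $\mathbf{I}_p v(k) = 0$. Hence
\begin{equation*}
u(k-2M) - \hat{u}(k) = -\mathbf{I}_p \mathbf{D}_{2M}^+ \mathbf{C}_{2M}\,\tilde{e}(k) \longrightarrow 0,
\end{equation*}
establishing $\hat{u}(k) \to z^{-2M} u(k)$, i.e. unbiased estimation with delay $q=2M$. The main obstacle I anticipate is this last null-space bookkeeping step, where the pseudo-inverse does \emph{not} automatically invert $\mathbf{D}_{2M}$ uniquely; the argument hinges critically on Lemma \ref{lm:ipd} being invoked precisely to cancel that ambiguity. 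The rest of the proof is a standard observer-style error-convergence argument, but only because Theorem \ref{thm:zeros-of-a-bipkc} guarantees that minimum phase translates into Schur stability of the closed-loop transition matrix.
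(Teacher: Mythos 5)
Your proposal is correct and follows essentially the same route as the paper's proof: the homogeneous error dynamics $\tilde{e}(k+1)=(A-B\mathbf{I}_p\mathbf{D}_{2M}^+\mathbf{C}_{2M})\tilde{e}(k)$ made Schur by Theorem \ref{thm:zeros-of-a-bipkc} under the minimum-phase hypothesis, followed by expressing the input reconstruction error through the null-space component of $\delta\mathbf{U}_{2M}$ (the paper writes this as $\mathbf{I}_p(\mathbf{D}_{2M}^+\mathbf{D}_{2M}-\mathbf{I})\delta\mathbf{U}_{2M}(k)$, which is exactly your $-\mathbf{I}_p v(k)$) and annihilating it via Lemma \ref{lm:ipd}. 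Your identification of the null-space bookkeeping as the critical step is precisely where the paper also leans on Lemma \ref{lm:ipd}.
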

\begin{proof}
Proof is provided in the \textbf{Appendix \ref{app:mp-filter}}.
\end{proof}
\subsection{Non-minimum phase systems}\label{subsec:nmp-sys}
It should be noted that one cannot use Theorem \ref{thm:mp-filter} for non-minimum phase and/or non-square systems as well as systems with transmission zeros on the unit circle.

Consequently, below we will derive the dynamics associated with $\delta \mathbf{U}_{}(k)$ and attempt to stabilize it to ensure a zero tracking error. Let us define,
\begin{equation}
\eta(k)=\mathbf{D}_{}\delta \mathbf{U}_{}(k)
\end{equation}
It now follows that the dynamics of $\eta(k)$ is governed by,
\begin{equation} \label{eq:eta-dyn}
{\eta(k+1)=\tilde{A}\eta(k)+\mathbf{C}_{}B_F\mathcal{U}(k-2M)+\mathcal{ST'}}
\end{equation}
where
\begin{equation}
\tilde{A}=\mathbf{C}_{}(A-B\mathbf{I}_P \mathbf{D}_{}^+\mathbf{C}_{})\mathbf{C}_{}^\dagger.
\end{equation}
and
\begin{equation}
  {  \mathcal{ST'}=\mathbf{C}\mathbf{D}^+(\mathbf{G}\mathbf{W}(k-2M)-\mathbf{V}(k-2M))}
\end{equation}
This follows by multiplying both sides of equation (\ref{eq:state-error-dyn}) by $\mathbf{C}_{}$ and then replacing $\mathbf{C}_{}e(k)$ by equation (\ref{eq:du-ce}), to yield the result.

In order to obtain a stable filter for non-minimum phase systems that is applicable to both square and non-square systems, we rotate both $\mathbf{C}_{}$ and $\mathbf{H}_{}$ through a rotation matrix $\mathbf{R} \in \mathbb{R}^{2M \times 2M}$ about an \textit{arbitrary axis} as follows,
\begin{equation}
\mathbf{C}_{}^{new}=\mathbf{R}\mathbf{C}_{}
\end{equation}
\begin{equation}
\mathbf{H}_{}^{new}=(\mathbf{R}\mathbf{H}_{}^T)^T
\end{equation}
A square matrix is said to be a rotation matrix if $\mathbf{R}\mathbf{R}^T=\mathbf{R}^T\mathbf{R}=\mathbf{I}$ and $\|\mathbf{R}\|=1$. This operation represents a \textit{similarity transformation} for the following system \footnote{Note that the system matrices of $\mathbf{S}^\eta$, i.e. $(\tilde{A},\mathbf{C}_{}B_F,\mathbf{H}_{})$ after applying the similarity transformation of the matrix $\mathbf{R}$ is represented by $(\mathbf{R}\tilde{A}\mathbf{R}^T,\mathbf{C}_{}^{new}B_F, \mathbf{H}_{}^{new})$. },
\begin{equation}\label{eq:s-eta}
{\mathbf{S}^{\eta}:\left\lbrace \begin{array}{l} \eta(k+1)=\tilde{A}\eta(k)+\mathbf{C}_{}B_F\mathcal{U}(k-2M)+\mathcal{ST'} \\ \mathcal{H}(k)=\mathbf{H}_{}\eta(k)
\end{array} \right.}
\end{equation}
{Note that $\mathbb{E}(\mathcal{H}(k)) \equiv 0$  since,}
{
\begin{eqnarray}
\mathcal{H}(k)&=&\mathbf{H}_{}\eta(k)= \mathbf{H}_{}\mathbf{D}_{}\delta \mathbf{U}_{}(k) \nonumber \\
&=&\mathbf{H}_{}\mathbf{D}_{}(\mathbf{U}_{}(k-2M)-\mathbf{U}_{}^{aux}(k-2M)) \nonumber \\
&=& \mathbf{H}_{}(\mathbf{Y}_{}(k-2M)-\mathbf{G}\mathbf{W}(k-2M)-\mathbf{V}(k-2M) \nonumber \\
&-&\mathbf{Y}_{}(k-2M))
\end{eqnarray}}
Therefore,

\[ {\mathbb{E}(\mathcal{H}(k)) = 0} \]
Therefore, if the system $\mathbf{S}$ has any transmission zeros, then the difference between the real input and the auxiliary input serves as the output-zeroing input of the system (\ref{eq:s-eta}). One may have suggested now to use the feedback from $\mathcal{H}(k)$ to stabilize the system  $\mathbf{S}^{\eta}$. However, clearly the system $\mathbf{S}^{\eta}$ is neither controllable nor observable.

Therefore, we now instead define $\hat{\eta}(k)$ to be governed by,
\begin{multline}\label{eq:eta-hat}
\hat{\eta}(k+1)=(\mathbf{P}_c^{new}\tilde{A}+\mathbf{P}_h^{new}+\mathbf{K}_2 \mathbf{P}_h)\hat{\eta}(k) \\ +\mathbf{P}_c^{new}\mathbf{C}_{}B_F\mathcal{U}(k-2M)
\end{multline}
where,

\[\mathbf{P}_h^{new}=\mathbf{H}^{new^T}_{}(\mathbf{H}_{}^{new}\mathbf{H}^{new^T}_{})^{-1}\mathbf{H}_{}^{new}\]
\[\mathbf{P}_c^{new}=\mathbf{C}^{new}_{}(\mathbf{C}_{}^{new^T}\mathbf{C}^{new}_{})^{-1}\mathbf{C}_{}^{new^T}\]
with $\mathbf{K}_2$ chosen such that all the eigenvalues of $(\mathbf{P}_c^{new}\tilde{A}+\mathbf{P}_h^{new}+\mathbf{K}_2 \mathbf{P}_h)$ lie inside the unit circle.
 {Note that if the unknown input is a step function, then $\mathbb{E}(\eta(k)-\hat{\eta}(k)) \rightarrow 0$ as $k \rightarrow \infty$} \footnote{If one could design a filter in the form of $\hat{\eta}(k+1)=(\mathbf{P}_c^{new}\tilde{A}+\mathbf{P}_h^{new}\tilde{A}+\mathbf{K}_2 \mathbf{P}_h)\hat{\eta}(k)+\mathbf{P}_c^{new}\mathbf{C}_{}B_F\mathcal{U}$, then one would have an unbiased estimation of all types of inputs, however, this filter and similar ones would unfortunately be neither controllable nor observable.}.

In order to establish the above claim, first, we discuss the stabilization of the filter (\ref{eq:eta-hat}) through selection of $\mathbf{K}_2$ and then address its tracking error behavior and performance.

It can be easily concluded that the stabilization of the filter (\ref{eq:eta-hat}) by the gain $\mathbf{K}_2$ is possible if and only if the pair $(\mathbf{P}_c^{new}\tilde{A}+\mathbf{P}_h^{new},-\mathbf{P}_h)$ is observable, which provides an explicit criterion for selection of the rotation matrix $\mathbf{R}$. However, certain care should be exercised in selection of $\mathbf{R}$ as pointed out in the following two remarks.

\begin{remark}\label{rm:obs1}
If $\mathbf{R}$ is selected such that the column space of $\mathbf{C}_{}^{new}$ coincides with the column space of $\mathbf{C}_{}$ (or equivalently the row space of $\mathbf{H}_{}^{new}$ coincides with the row space of $\mathbf{H}_{}$), then the pair $(\mathbf{P}_c^{new}\tilde{A}+\mathbf{P}_h^{new},-\mathbf{P}_h)$ will \underline{not} be observable since (a) $\mathbf{P}_h^{new}=\mathbf{P}_h$, (b) $\mathbf{P}_h(\mathbf{P}_c^{new}\tilde{A}+\mathbf{P}_h^{new})=\mathbf{P}_h^{new}$, and (c) $\mathbf{P}_h$ is column rank deficient. Hence, the observability matrix will be rank deficient.
\end{remark}
\begin{remark}\label{rm:obs2}
If $\mathbf{R}$ is selected such that the column space of $\mathbf{C}_{}^{new}$ coincides with the row space of $\mathbf{H}_{}$, then the pair $(\mathbf{P}_c^{new}\tilde{A}+\mathbf{P}_h^{new},-\mathbf{P}_h)$ will not be observable since $\mathbf{P}_h(\mathbf{P}_c^{new}\tilde{A}+\mathbf{P}_h^{new})=0$, and therefore the observability matrix will be rank deficient.
\end{remark}
Geometrically speaking, for a SISO system having a single state, Remarks \ref{rm:obs1} and \ref{rm:obs2} imply that $\mathbf{R}$ should not be a matrix resulting in a rotation of $\frac{q\pi}{2}$, $q \in \mathbb{Z}$, about the axis passing through origin and should be perpendicular to both $\mathbf{C}_{}$ and $\mathbf{H}_{}$. Otherwise, for example for a rotation angel of $\frac{\pi}{2}$, the column space of $\mathbf{C}_{}^{new}$ will coincide with the row space of $\mathbf{H}_{}$. All other $\mathbf{R}$s except those excluded in Remarks \ref{rm:obs1} and \ref{rm:obs2} will yield an observable pair $(\mathbf{P}_c^{new}\tilde{A}+\mathbf{P}_h^{new},-\mathbf{P}_h)$. However, the closer the rotation angel is to $\frac{q\pi}{2}$, a higher gain $\mathbf{K}_2$ will be required to stabilize the system. This will be numerically illustrated in the simulation case studies in Section \ref{sec:sims}.

Moreover, if a square system has one or more transmission zeros \textit{exactly equal} to 1 (with \textit{no}  other transmission zeros on the unit circle), then there will exist no $\mathbf{R}$ such that the pair $(\mathbf{P}_c^{new}\tilde{A}+\mathbf{P}_h^{new},-\mathbf{P}_h)$ is observable. We can now state the following result.
\begin{lemma}\label{rm:obs3}
If a square system $\mathbf{S}$ has a transmission zero exactly equal to 1 ($z=1$), then the pair $(\mathbf{P}_c^{new}\tilde{A}+\mathbf{P}_h^{new},-\mathbf{P}_h)$  will not be observable for any selection of the rotation matrix $\mathbf{R}$.
\end{lemma}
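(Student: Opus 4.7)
The plan is to exhibit a common left null vector of the observability test matrix for the pair $(\mathbf{P}_c^{new}\tilde{A}+\mathbf{P}_h^{new},-\mathbf{P}_h)$ via the PBH criterion, built directly from the eigenvector of $A-B\mathbf{I}_p\mathbf{D}_{2M}^+\mathbf{C}_{2M}$ that Theorem \ref{thm:zeros-of-a-bipkc} guarantees to exist at eigenvalue $1$.

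First I would record the key geometric identity that is available for free because $\mathbf{R}$ is a rotation: since $\mathbf{H}_{2M}^T=(\mathbf{C}_{2M}^T)^\perp$ and rotations preserve orthogonality, the column space of $\mathbf{C}_{2M}^{new}$ and the row space of $\mathbf{H}_{2M}^{new}$ are still orthogonal complements in $\mathbb{R}^{2Ml}$. Hence $\mathbf{P}_c^{new}$ and $\mathbf{P}_h^{new}$ are complementary orthogonal projectors, satisfying
\begin{equation*}
\mathbf{P}_c^{new}+\mathbf{P}_h^{new}=\mathbf{I}.
\end{equation*}
This holds independently of the choice of $\mathbf{R}$, which is what will let the argument work uniformly.

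Next I would lift the transmission-zero eigenvector into the $\eta$-space. By Theorem \ref{thm:zeros-of-a-bipkc}, if $z=1$ is an invariant zero of $\mathbf{S}$, there exists a nonzero $w\in\mathbb{R}^n$ with $(A-B\mathbf{I}_p\mathbf{D}_{2M}^+\mathbf{C}_{2M})w=w$. Set $v=\mathbf{C}_{2M}w$; since the system is observable and $M\geq n$, $\mathbf{C}_{2M}$ has full column rank, so $v\neq 0$ and $\mathbf{C}_{2M}^\dagger \mathbf{C}_{2M}=\mathbf{I}$. A one-line computation then gives
\begin{equation*}
\tilde{A}v=\mathbf{C}_{2M}(A-B\mathbf{I}_P\mathbf{D}_{2M}^+\mathbf{C}_{2M})\mathbf{C}_{2M}^\dagger\mathbf{C}_{2M}w=\mathbf{C}_{2M}w=v,
\end{equation*}
so $v$ is an eigenvector of $\tilde{A}$ at eigenvalue $1$ lying in the column space of $\mathbf{C}_{2M}$, i.e.\ in the orthogonal complement of the row space of $\mathbf{H}_{2M}$. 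Consequently $\mathbf{P}_h v=0$, which handles the output-matrix side of the PBH test.

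Finally I would check that $v$ is also an eigenvector of $\mathbf{P}_c^{new}\tilde{A}+\mathbf{P}_h^{new}$ at eigenvalue $1$. Using $\tilde{A}v=v$ and the complementarity identity,
\begin{equation*}
(\mathbf{P}_c^{new}\tilde{A}+\mathbf{P}_h^{new})v=\mathbf{P}_c^{new}v+\mathbf{P}_h^{new}v=(\mathbf{P}_c^{new}+\mathbf{P}_h^{new})v=v.
\end{equation*}
Combined with $\mathbf{P}_h v=0$, the PBH test immediately shows the pair is unobservable, and since $\mathbf{R}$ entered only through $\mathbf{P}_c^{new}+\mathbf{P}_h^{new}=\mathbf{I}$, the conclusion holds for every rotation matrix. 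The only subtlety I anticipate is making sure $v\neq 0$, which is why I would emphasize full column rank of $\mathbf{C}_{2M}$; otherwise there is no serious obstacle, the proof being essentially a PBH application once the complementarity of the projectors is noted.
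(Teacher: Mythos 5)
Your proof is correct and follows essentially the same route as the paper's: a PBH/Hautus test at $\lambda=1$, the complementarity $\mathbf{P}_c^{new}+\mathbf{P}_h^{new}=\mathbf{I}$ inherited from the rotation, and the eigenvector of $A-B\mathbf{I}_p\mathbf{D}_{2M}^{+}\mathbf{C}_{2M}$ at eigenvalue $1$ supplied by Theorem \ref{thm:zeros-of-a-bipkc}. The only differences are cosmetic and to your credit: you apply the PBH test directly to the pair rather than to its transpose, and you explicitly construct $v=\mathbf{C}_{2M}w$, verify $v\neq 0$ via full column rank of $\mathbf{C}_{2M}$, and check $\mathbf{P}_h v=0$, steps the paper's proof leaves implicit.
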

\begin{proof}\label{pf:obs3}
Proof is provided in the \textbf{Appendix \ref{app:obs3}}.
\end{proof}
If a square system $\mathbf{S}$ has transmission zeros on the unit circle except at $z=1$, then every $\mathbf{R}$ except those stated in Remarks \ref{rm:obs1} and \ref{rm:obs2} yield an observable pair $(\mathbf{P}_c^{new}\tilde{A}+\mathbf{P}_h^{new},-\mathbf{P}_h)$. Non-square systems rarely have transmission zeros (\cite{DAVISON1974643}), therefore it is less likely to have a transmission zero that is equal to 1, or in general on the unit circle. If so then a matrix $\mathbf{R}$ may or may not exist.

Once the observability condition is satisfied, it is straightforward to determine $\mathbf{K}_2$ by using the Ackerman's method to place the system poles at desired locations. The significance of our proposed solution can be appreciated by the fact that the designed feedback not only stabilizes the system for both minimum and non-minimum phase systems in general, but also it provides an unbiased estimate of the unknown step input as stated in the following theorem.
\begin{theorem}\label{thm:nmp-filter}
Let the Assumptions A(1) and A(2) hold, $l \geq m$, and $M \geq n$. If the unknown input is a step function, and if there exists an $\mathbf{R}$ such that the pair $(\mathbf{P}_c^{new}\tilde{A}+\mathbf{P}_h^{new},-\mathbf{P}_h)$ is observable, and $\mathbf{K}_2$ is chosen such that all the eigenvalues of $\mathbf{P}_c^{new}\tilde{A}+\mathbf{P}_h^{new}+\mathbf{K}_2\mathbf{P}_h$ lie inside the unit circle,	then an unbiased estimate of the unknown input $u(k-2M)$ is given by,

\begin{equation}\label{eq:nmp-filter}
\mathbf{S}^{inv}_{stp} : \left\lbrace \begin{array}{l} \hat{\eta}(k+1)=(\mathbf{P}_c^{new}\tilde{A}+\mathbf{P}_h^{new}+\mathbf{K}_2 \mathbf{P}_h)\hat{\eta}(k) \\+\mathbf{P}_c^{new}\mathbf{C}_{}B_F\mathcal{U}(k-2M)\\
\hat{\mathbf{U}}_{}(k)=\mathbf{D}_{}^+\hat{\eta}(k)+\mathbf{U}^{aux}_{}(k-2M) \\
\hat{u}(k)=\mathbf{I}_p \hat{\mathbf{U}}_{}(k)
\end{array} \right.
\end{equation}
\end{theorem}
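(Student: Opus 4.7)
The plan is to establish unbiasedness in three steps: (i) derive a closed-form recursion for the estimation error $\tilde{\eta}(k):=\eta(k)-\hat{\eta}(k)$; (ii) argue that the forcing term in that recursion vanishes after finitely many steps for a step input, so that $\tilde{\eta}(k)\to 0$; and (iii) translate this into $\hat{u}(k)-u(k-2M)\to 0$. For step (i), I would subtract the filter recursion (\ref{eq:eta-hat}) from the true $\eta$-dynamics (\ref{eq:eta-dyn}) after splitting $\tilde{A}\eta(k)=(\mathbf{P}_c^{new}+\mathbf{P}_h^{new})\tilde{A}\eta(k)$, using that $\mathbf{R}$ is orthogonal so the rotated column and row subspaces are orthogonal complements and $\mathbf{P}_c^{new}+\mathbf{P}_h^{new}=\mathbf{I}$. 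The identity $\mathbf{H}_{2M}\eta(k)\equiv 0$ noted just after (\ref{eq:s-eta}) then lets me rewrite $-\mathbf{K}_2\mathbf{P}_h\hat{\eta}(k)$ as $\mathbf{K}_2\mathbf{P}_h\tilde{\eta}(k)$, and I expect to land on
\[\tilde{\eta}(k+1)=A_{cl}\,\tilde{\eta}(k)+\mathbf{P}_h^{new}\bigl(\eta(k+1)-\eta(k)\bigr),\]
where $A_{cl}:=\mathbf{P}_c^{new}\tilde{A}+\mathbf{P}_h^{new}+\mathbf{K}_2\mathbf{P}_h$ is Schur by the hypothesis on $\mathbf{K}_2$.

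The heart of the proof is step (ii), which I expect to be the main obstacle. The key observation is that $\mathbf{H}_{2M}\mathbf{C}_{2M}=0$, which is immediate from (\ref{eq:H-org}); hence $\mathbf{K}_1\mathbf{C}_{2M}=(\mathbf{H}_{2M}\mathbf{D}_{2M})^+\mathbf{H}_{2M}\mathbf{C}_{2M}=0$. Substituting the output equation into $\delta\mathbf{U}_{2M}(k)=\mathbf{U}_{2M}(k-2M)-\mathbf{K}_1\mathbf{Y}_{2M}(k-2M)$ therefore eliminates the state entirely and yields
\[\eta(k)=\mathbf{D}_{2M}(\mathbf{I}-\mathbf{K}_1\mathbf{D}_{2M})\mathbf{U}_{2M}(k-2M),\]
a memoryless linear functional of the input window. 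Consequently, for a step input $u(k)=u_0$ valid for $k\geq k_0$, the window $\mathbf{U}_{2M}(k-2M)$ is constant once $k\geq k_0+2M$, and the forcing term $\eta(k+1)-\eta(k)$ is identically zero from that point onwards. A standard input-to-state stability estimate, combined with $A_{cl}$ being Schur, then delivers $\tilde{\eta}(k)\to 0$. The conceptual pitfall to avoid is analyzing $\eta(k)$ through its state-space recursion alone: the matrix $(A-B\mathbf{I}_p\mathbf{D}_{2M}^+\mathbf{C}_{2M})$ may possess modes outside the unit circle --- in the square case it contains all unstable invariant zeros by Theorem \ref{thm:zeros-of-a-bipkc} --- so without the memoryless rewrite above the forcing term would generically not decay.

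Step (iii) is purely algebraic. From $\hat{\mathbf{U}}_{2M}(k)=\mathbf{D}_{2M}^+\hat{\eta}(k)+\mathbf{U}^{aux}_{2M}(k-2M)$ and $\mathbf{U}_{2M}(k-2M)=\delta\mathbf{U}_{2M}(k)+\mathbf{U}^{aux}_{2M}(k-2M)$, I would decompose $\delta\mathbf{U}_{2M}(k)=\mathbf{D}_{2M}^+\eta(k)+(\mathbf{I}-\mathbf{D}_{2M}^+\mathbf{D}_{2M})\delta\mathbf{U}_{2M}(k)$ into its row-space and null-space components relative to $\mathbf{D}_{2M}$. Lemma \ref{lm:ipd} annihilates the null-space component under left multiplication by $\mathbf{I}_p$, leaving
\[\hat{u}(k)-u(k-2M)=-\mathbf{I}_p\mathbf{D}_{2M}^+\tilde{\eta}(k)\longrightarrow 0,\]
which matches the unbiasedness definition with delay $q=2M$ and completes the argument.
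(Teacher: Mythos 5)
Your proposal is correct, and its skeleton coincides with the paper's proof in Appendix \ref{app:nmp-filter}: step (i), the error recursion driven by $\mathbf{P}_h^{new}\bigl(\eta(k+1)-\eta(k)\bigr)$ obtained via $\mathbf{P}_c^{new}+\mathbf{P}_h^{new}=\mathbf{I}$ and $\mathbf{P}_h\eta(k)\equiv 0$, and step (iii), the passage from $\tilde{\eta}(k)\to 0$ to $\hat{u}(k)\to u(k-2M)$ through the null-space projector and Lemma \ref{lm:ipd}, are essentially the paper's argument (your sign convention $\tilde{\eta}=\eta-\hat{\eta}$ versus the paper's $e=\hat{\eta}-\eta$ is immaterial, and your signs are internally consistent). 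Where you genuinely depart is step (ii): the paper takes the $\mathcal{Z}$-transform and invokes the final value theorem on $e(z)=(zI-\mathbf{P}_c^{new}\tilde{A}-\mathbf{P}_h^{new}-\mathbf{K}_2\mathbf{P}_h)^{-1}\mathbf{P}_h^{new}(z-1)\eta(z)$, which tacitly requires that $\eta(z)$ have at most a simple pole at $z=1$ and no other poles on or outside the unit circle --- a fact the paper never verifies and which is not evident from the recursion (\ref{eq:eta-dyn}), whose matrix $\tilde{A}$ carries the unstable transmission zeros. Your memoryless identity $\eta(k)=\mathbf{D}_{2M}(\mathbf{I}-\mathbf{K}_1\mathbf{D}_{2M})\mathbf{U}_{2M}(k-2M)$, which follows from $\mathbf{H}_{2M}\mathbf{C}_{2M}=0$ and hence $\mathbf{K}_1\mathbf{C}_{2M}=0$, shows $\eta$ depends only on the finite input window, so for a step input the forcing term is exactly zero once $k$ exceeds the step time by $2M$ and convergence follows from the closed-loop matrix being Schur alone. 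This time-domain route is more elementary, supplies precisely the justification missing from the paper's final-value-theorem step, and makes explicit why the possibly unstable $\eta$-recursion is not an obstacle.
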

\begin{proof}
Proof is provided in the \textbf{Appendix \ref{app:nmp-filter}}.
\end{proof}
Note that in contrast to the filter (\ref{eq:mp-filter}), which is limited to only square and minimum phase systems, the filter (\ref{eq:nmp-filter}) is a general solution for both minimum and non-minimum phase systems of any size that satisfies $l \geq m$ \footnote{One can obtain similar results with $\hat{\eta}(k+1)=(\mathbf{P}_h^{new}\tilde{A}+\mathbf{P}_c^{new}+\mathbf{K}_2 \mathbf{P}_h)\hat{\eta}(k)+\mathbf{P}_h^{new}\mathbf{C}_{}B_F\mathcal{U}$ due to symmetrical properties of the rotation matrix.}.  Moreover, it \textit{can handle} systems that have transmission zeros on the unit circle.

By a close inspection of the proof of Theorem \ref{thm:nmp-filter} it follows that the strategy for constructing a stable and unbiased inversion filter for an \textit{unknown ramp} as well as \textit{step input functions} can be developed. The strategy for the ramp input is to specifically construct a filter that results in increasing the \textit{type} of the error dynamics to diminish the steady state error.  Based on the above observation, the following theorem can now be stated.
\begin{theorem}\label{thm:nmp-filter-rmp}
Let Assumptions A(1) and A(2) hold, $l \geq m$, and $M \geq n$. If the unknown input is a ramp function, and if there exists a rotation matrix $\mathbf{R}$ such that the pair $(\mathbf{P}_h^{new}\tilde{A}^2-2\mathbf{P}_h^{new}\tilde{A}+\tilde{A}+\mathbf{P}_h^{new},-\mathbf{P}_h)$ is observable, and $\mathbf{K}_2$ is chosen such that all the eigenvalues of $\mathbf{P}_h^{new}\tilde{A}^2-2\mathbf{P}_h^{new}\tilde{A}+\tilde{A}+\mathbf{P}_h^{new}+\mathbf{K}_2\mathbf{P}_h$ lie inside the unit circle, then an unbiased estimate of the unknown input $u(k-2M)$ is given by,
\begin{equation}\label{eq:nmp-filter-rmp}
\mathbf{S}^{inv}_{rmp} : \left\lbrace \begin{array}{l} \hat{\eta}(k+1)=\\ (\mathbf{P}_h^{new}\tilde{A}^2-2\mathbf{P}_h^{new}\tilde{A}+\tilde{A}+\mathbf{P}_h^{new}+\mathbf{K}_2 \mathbf{P}_h)\hat{\eta}(k) \\ +\Gamma(k-2M)\\
\hat{\mathbf{U}}_{}(k)=\mathbf{D}_{}^+\hat{\eta}(k)+\mathbf{U}^{aux}_{}(k-2M) \\
\hat{u}(k)=\mathbf{I}_p \hat{\mathbf{U}}_{}(k)

\end{array} \right.
\end{equation}
where,
\begin{multline}
\Gamma(k-2M)=\mathbf{P}_h^{new}\mathbf{C}_{}B_F\mathcal{U}(k-2M+1) \\ +\left(\mathbf{P}_h^{new}\tilde{A}-2\mathbf{P}_h^{new}+\mathbf{I}\right)\mathbf{C}_{}B_F\mathcal{U}(k-2M).
\end{multline}

\end{theorem}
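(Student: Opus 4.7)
The plan is to mirror the proof of Theorem \ref{thm:nmp-filter}, replacing the first-difference calculation that suffices for step inputs by a second-difference calculation tailored to ramps. Let $\tilde{\eta}(k):=\eta(k)-\hat{\eta}(k)$ denote the estimation error and set $\phi(k):=\mathbf{C}_{2M}B_F\mathcal{U}(k-2M)$, so that the true dynamics (\ref{eq:eta-dyn}) read $\eta(k+1)=\tilde{A}\eta(k)+\phi(k)$. Subtracting the filter (\ref{eq:nmp-filter-rmp}) from this identity and using $\hat{\eta}(k)=\eta(k)-\tilde{\eta}(k)$ together with $\mathbf{H}_{2M}\eta(k)\equiv 0$ (which forces $\mathbf{K}_2\mathbf{P}_h\eta(k)=0$), I would obtain an error recursion of the form
\begin{equation*}
\tilde{\eta}(k+1)=F_{rmp}\,\tilde{\eta}(k)+r(k),
\end{equation*}
where $F_{rmp}:=\mathbf{P}_h^{new}\tilde{A}^{2}-2\mathbf{P}_h^{new}\tilde{A}+\tilde{A}+\mathbf{P}_h^{new}+\mathbf{K}_2\mathbf{P}_h$ and $r(k)$ collects every remaining term depending on $\eta(k)$ and $\phi(k)$.

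The central algebraic step is to show that $r(k)$ collapses to a pure second difference of $\eta$. I would apply the identity $\tilde{A}\eta(k)=\eta(k+1)-\phi(k)$ once to rewrite the $2\mathbf{P}_h^{new}\tilde{A}\eta(k)$ contribution and twice in
\begin{equation*}
\tilde{A}^{2}\eta(k)=\eta(k+2)-\phi(k+1)-\tilde{A}\phi(k)
\end{equation*}
to rewrite $-\mathbf{P}_h^{new}\tilde{A}^{2}\eta(k)$, and then regroup. The specific form of $\Gamma(k-2M)=\mathbf{P}_h^{new}\phi(k+1)+(\mathbf{P}_h^{new}\tilde{A}-2\mathbf{P}_h^{new}+\mathbf{I})\phi(k)$ is engineered precisely so that every $\phi$-contribution in $r(k)$ is cancelled, leaving
\begin{equation*}
r(k)=-\mathbf{P}_h^{new}\bigl(\eta(k+2)-2\eta(k+1)+\eta(k)\bigr)=-\mathbf{P}_h^{new}\Delta^{2}\eta(k).
\end{equation*}
I expect this cancellation to be the most delicate part of the proof, though it is largely bookkeeping once the two substitutions above are in place.

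With the error recursion in this clean form, the remainder is direct. For an unknown ramp $u(k)=u_{0}+u_{1}k$, the stacked vector $\mathbf{U}_{2M}(k-2M)$ is affine in $k$; since $\mathbf{U}_{2M}^{aux}(k-2M)=\mathbf{K}_1\mathbf{Y}_{2M}(k-2M)$ is a linear function of the ramp response of the plant, it is asymptotically affine in $k$ as well, so $\eta(k)=\mathbf{D}_{2M}\delta\mathbf{U}_{2M}(k)$ is asymptotically affine and $\Delta^{2}\eta(k)\to 0$. By the assumed observability of $(F_{rmp}-\mathbf{K}_2\mathbf{P}_h,-\mathbf{P}_h)$, the gain $\mathbf{K}_2$ can be chosen via Ackermann's formula to render $F_{rmp}$ Schur; combined with $\Delta^{2}\eta(k)\to 0$ this yields $\tilde{\eta}(k)\to 0$.

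Finally, the unbiasedness of $\hat{u}(k)$ follows exactly as at the end of the proof of Theorem \ref{thm:nmp-filter}: substituting $\hat{\eta}(k)=\eta(k)-\tilde{\eta}(k)$ into the $\hat{\mathbf{U}}_{2M}$ line of (\ref{eq:nmp-filter-rmp}), using $\eta(k)=\mathbf{D}_{2M}\delta\mathbf{U}_{2M}(k)$, and invoking Lemma \ref{lm:ipd} to collapse $\mathbf{I}_p\mathbf{D}_{2M}^{+}\mathbf{D}_{2M}\delta\mathbf{U}_{2M}(k)$ to $\mathbf{I}_p\delta\mathbf{U}_{2M}(k)$ gives $\hat{u}(k)\to\mathbf{I}_p\mathbf{U}_{2M}(k-2M)=u(k-2M)$. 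The hard part of the whole argument is therefore the second paragraph; once $r(k)=-\mathbf{P}_h^{new}\Delta^{2}\eta(k)$ has been verified, the ramp property and standard pole placement deliver the result in the same pattern as the step-input case.
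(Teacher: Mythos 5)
Your proposal is correct and follows essentially the same route as the paper's proof in Appendix I: derive the error recursion, use the substitutions $\tilde{A}\eta(k)=\eta(k+1)-\phi(k)$ and $\tilde{A}^2\eta(k)=\eta(k+2)-\phi(k+1)-\tilde{A}\phi(k)$ to collapse the forcing term to $\pm\mathbf{P}_h^{new}\bigl(\eta(k+2)-2\eta(k+1)+\eta(k)\bigr)$, and then conclude unbiasedness via Lemma \ref{lm:ipd} exactly as in Theorem \ref{thm:nmp-filter}. The only (cosmetic) difference is that you argue the vanishing of the error in the time domain ($\Delta^2\eta(k)\to 0$ for an asymptotically affine $\eta$ plus Schur stability of $F_{rmp}$), whereas the paper takes the $\mathcal{Z}$-transform and invokes the final value theorem on $e(z)=-(zI-A_2)^{-1}\mathbf{P}_h^{new}(z-1)^2\eta(z)$; these are equivalent under the same implicit boundedness assumptions on $\eta$.
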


\begin{proof}
Proof is provided in the \textbf{Appendix \ref{app:nmp-filter-rmp}}.
\end{proof}
It is interesting to note that the filter (\ref{eq:nmp-filter-rmp}) \underline{cannot} be obtained through standard and basic mathematical operations (such as a similarity transformation) from the filter (\ref{eq:nmp-filter}) or vice versa. This concludes our proposed general solution to inversion of discrete-time linear systems.

To summarize, the unknown input was reconstructed from its projection onto the column space of $\mathbf{C}_{}$ and the row space of $\mathbf{H}_{}$. The projection on the row space of $\mathbf{H}_{}$ is simply given by equation (\ref{eq:hy-gu}), however, the projection on $\mathbf{C}_{}$ is indirectly obtained from the reconstruction of $\mathbf{D}_{}\delta\mathbf{U}$. The term $\mathbf{D}_{}\delta\mathbf{U}$ has this important property that it is orthogonal to the subspace that is spanned by the rows of $\mathbf{H}_{}$.

Yet, two important issues are associated with our proposed technique. First, the construction of $\mathbf{D}_{}\delta\mathbf{U}$ is an unstable process for non-minimum phase systems. Second, computation of $\delta\mathbf{U}$ requires the inverse of $\mathbf{D}_{}$, which is a non-square and rank-deficient matrix under most circumstances.

To address the first issue we have proposed a novel technique in which the column space of $\mathbf{C}_{}$ and the row space of $\mathbf{H}_{}$ are transformed through a rotation matrix about an arbitrary axis, followed by introducing a feedback that not only stabilizes, but also eliminates the steady state error of the inverse filter. To address the second issue, Lemma \ref{lm:ipd} is introduced that is always satisfied for minimal systems with $l \geq m$, even if $\mathbf{D}_{}$ is rank deficient.

In the next section, we provide a solution to our \underline{Problem 2} that was introduced in Section II.

\section{The Proposed Inversion-Based Fault Estimation for Non-minimum Phase Fault to Output Systems}\label{sec:fault-estimation}

One of the most important applications of  system inversion is in the problem of fault estimation. A solution to this problem is  essential for any successful fault-tolerant control scheme and reliable operation of most engineering systems. In this section, we  show that our proposed system inversion approach can be easily adopted for fault estimation purposes. The advantage of our methodology is that the unknown fault input is directly reconstructed from only the system measurements \textit{without} requiring any \textit{a priori} estimate of the system states. Moreover, it can handle transmission zeros everywhere on the complex plan even on the unit circle.

We follow a similar procedure that was proposed in the previous section with the difference that now in the system $\mathbf{S}^f$, $u(k)$ is assumed to be known and the unknown input, which is the injected fault signal, is now designated as $f(k)$.

Therefore, let us define the vector $\mathbf{F}^{aux}_{}$ as follows,
\begin{equation}
\mathbf{F}^{aux}_{}(k-2M)=\mathbf{K}_1^f(\mathbf{Y}_{}(k-2M)-\mathbf{D}_{}\mathbf{U}_{}(k-2M))
\end{equation}
where $\mathbf{K}_1^f$ is given by,
\begin{equation}
\mathbf{K}_1^f=(\mathbf{H}_{}\mathbf{E}_{})^+\mathbf{H}_{}
\end{equation}and
\begin{equation}\label{eq:em-def}
\mathbf{E}_{}=\left(\begin{array}{cccc} E&0& \ldots & 0\\ CL& E& \ldots &0 \\ \vdots & \vdots & \vdots & \vdots \\ CA^{2M-1}L&CA^{2M-2}L&\ldots & L\end{array}\right)
\end{equation}
According to Theorem \ref{thm:u-uaux}, $\mathbf{F}^{aux}_{}(k-2M)$ represents a construction of $\mathbf{F}_{}(k-2M)$ if the fault-to-output dynamics has no transmission zeros. For the general case, we define a dummy state variable $z^f(k-2M)$ that satisfies the following relationship,
{
\begin{multline}\label{eq:y-z-f}
 \mathbf{Y}_{}(k-2M)=\mathbf{C}_{}z^f(k-2M)+\mathbf{D}_{}\mathbf{U}_{}(k-2M)\\
 +\mathbf{E}_{}\mathbf{F}_{}^{aux}(k-2M)+\mathbf{G}\mathbf{W}(k-2M)+\mathbf{V}(k-2M)
\end{multline}}
Moreover,  we define,
\begin{equation}
\eta^f(k)=\mathbf{E}_{}\delta \mathbf{F}_{}(k)=\mathbf{E}_{}(\mathbf{F}_{}(k-2M)-\mathbf{F}_{}^{aux}(k-2M))\end{equation}
Therefore, the dynamics associated with $\eta^f(k)$ is now governed by
\begin{equation}\label{eq:etaf-dyn}
\eta^f(k+1)=\tilde{A}^f\eta^f(k)+\mathbf{C}_{}B_F^f\mathcal{U}^f(k-2M).
\end{equation}
where,
\begin{equation}
\tilde{A}^f=\mathbf{C}_{}(A-L\mathbf{I}_p^f\mathbf{E}_{}^+\mathbf{C}_{})\mathbf{C}_{}^\dagger
\end{equation}
\begin{equation}
B_F^f=\left[\begin{array}{cccc}\mathbf{I} & -A& -L\mathbf{I}_p^f &-B\mathbf{I}_p
\end{array}\right]
\end{equation}
\begin{equation}
\mathcal{U}^f(k-2M)=\left[\begin{array}{c}z^f(k-2M+1)\\z^f(k-2M)\\\mathbf{F}^{aux}_{}(k-2M)\\\mathbf{U}_{}(k-2M)
\end{array}\right]
\end{equation}
and
\begin{equation}
\mathbf{I}_p^f=\left[\begin{array}{cc} \mathbf{I}_{p \times p} & \mathbf{0}_{p \times (2Mp-p)}
\end{array}\right]
\end{equation}

Note that as compared to equation (\ref{eq:eta-dyn}), the additional known information $\mathbf{U}_{}(k-2M)$ appears in $\mathcal{U}^f(k-2M)$. The dynamics of the system (\ref{eq:etaf-dyn}) is unstable if the fault-to-output dynamics has transmission zeros outside or on the unit circle. On the other hand, a close examination of the dynamics (\ref{eq:etaf-dyn}) reveals that it is quite similar to the dynamics that is governed by (\ref{eq:eta-dyn}). Therefore, the same strategy that was described in the previous section can now be applied here. Specifically, we can conclude the following result.

\begin{theorem}\label{thm:fault-est}
Let Assumptions B(1) and B(2) hold, $l \geq p$, and $M \geq n$. If the fault signal is a step loss of effectiveness (LOE) function, and there exists a rotation matrix $\mathbf{R}$ such that the pair $(\mathbf{P}_c^{new}\tilde{A}^f+\mathbf{P}_h^{new},-\mathbf{P}_h)$ is observable, and $\mathbf{K}_2^f$ is chosen such that all the eigenvalues of $\mathbf{P}_c^{new}\tilde{A}^f+\mathbf{P}_h^{new}+\mathbf{K}_2\mathbf{P}_h$ lie inside the unit circle, then an unbiased estimate of the fault vector $f(k-2M)$ is given by,
\begin{equation}\label{eq:nmp-filter-fault}
\mathbf{S}^{inv,f}_{stp} : \left\lbrace \begin{array}{l} \hat{\eta}^f(k+1)=\\(\mathbf{P}_c^{new}\tilde{A}^f+\mathbf{P}_h^{new}+\mathbf{K}_2^f \mathbf{P}_h)\hat{\eta}^f(k)\\+\mathbf{P}_c^{new}\mathbf{C}_{}B_F^f\mathcal{U}^f(k-2M)\\
\hat{\mathbf{F}}_{}(k)=\mathbf{E}_{}^+\hat{\eta}^f(k)+\mathbf{F}^{aux}_{}(k-2M) \\
\hat{f}(k)=\mathbf{I}_p \hat{\mathbf{F}}_{}(k)
\end{array} \right.
\end{equation}
\end{theorem}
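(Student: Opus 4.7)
The plan is to exploit the fact that this theorem is precisely the fault-to-output analog of Theorem \ref{thm:nmp-filter}, with the substitutions $\mathbf{D}_{2M}\to\mathbf{E}_{2M}$, $\mathbf{U}_{2M}\to\mathbf{F}_{2M}$, $\mathbf{U}^{aux}_{2M}\to\mathbf{F}^{aux}_{2M}$, $\tilde{A}\to\tilde{A}^f$, $B_F\to B_F^f$, $\mathcal{U}\to\mathcal{U}^f$, $\eta\to\eta^f$, and $\mathbf{I}_p\to\mathbf{I}_p^f$. All of the algebraic machinery required has already been derived in equations (\ref{eq:y-z-f})--(\ref{eq:etaf-dyn}). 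I would therefore mirror the three-step argument from the proof of Theorem \ref{thm:nmp-filter}: (i) derive the error dynamics, (ii) use the zero-output property of the rotated system to cancel a problematic term, and (iii) exploit the step-input hypothesis to show that the remaining forcing asymptotically vanishes.

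Concretely, I would first define $\tilde{\eta}(k)=\eta^f(k)-\hat{\eta}^f(k)$, subtract the filter equation in (\ref{eq:nmp-filter-fault}) from the true dynamics (\ref{eq:etaf-dyn}), and use the identity $\mathbf{P}_c^{new}+\mathbf{P}_h^{new}=\mathbf{I}$ (which holds because $\mathbf{R}$ is orthogonal and $\mathbf{H}_{2M}\mathbf{C}_{2M}=0$) together with (\ref{eq:etaf-dyn}) itself to rewrite the residual forcing and obtain
\begin{multline*}
\tilde{\eta}(k+1)=\bigl(\mathbf{P}_c^{new}\tilde{A}^f+\mathbf{P}_h^{new}+\mathbf{K}_2^f\mathbf{P}_h\bigr)\tilde{\eta}(k)\\
-\mathbf{K}_2^f\mathbf{P}_h\,\eta^f(k)+\mathbf{P}_h^{new}\bigl(\eta^f(k+1)-\eta^f(k)\bigr).
\end{multline*}
The decisive step is to note that $\eta^f(k)=\mathbf{E}_{2M}\,\delta\mathbf{F}_{2M}(k)$ lies in the column space of $\mathbf{C}_{2M}$ (the fault-channel analog of (\ref{eq:du-ce}), obtained by subtracting (\ref{eq:y-z-f}) from the fault-version of (\ref{eq:lump-sub-rel})), so $\mathbf{P}_h\,\eta^f(k)\equiv 0$ and the middle term drops out identically.

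For a step LOE fault, both $\mathbf{F}_{2M}(k-2M)$ and $\mathbf{F}^{aux}_{2M}(k-2M)$ tend to constants as $k\to\infty$, so $\eta^f$ approaches a constant and the remaining forcing $\mathbf{P}_h^{new}(\eta^f(k+1)-\eta^f(k))$ dies out. The observability hypothesis on $(\mathbf{P}_c^{new}\tilde{A}^f+\mathbf{P}_h^{new},-\mathbf{P}_h)$ guarantees a Schur choice of $\mathbf{K}_2^f$, so the autonomous error matrix is a discrete-time contraction and $\tilde{\eta}(k)\to 0$. Finally, from $\hat{\mathbf{F}}_{2M}=\mathbf{E}_{2M}^+\hat{\eta}^f+\mathbf{F}^{aux}_{2M}$ the asymptotic residual satisfies $\hat{\mathbf{F}}_{2M}-\mathbf{F}_{2M}\to(\mathbf{E}_{2M}^+\mathbf{E}_{2M}-\mathbf{I})\delta\mathbf{F}_{2M}\in\mathcal{N}(\mathbf{E}_{2M})$, and applying $\mathbf{I}_p^f$ together with the fault-channel analog of Lemma \ref{lm:ipd} (which is immediate from Assumption B under the same proof) annihilates this residual, giving the unbiased estimate $\hat{f}(k)\to f(k-2M)$.

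The main obstacle I anticipate is bookkeeping rather than conceptual: one must carefully distinguish the original projections $(\mathbf{P}_c,\mathbf{P}_h)$ from the rotated ones $(\mathbf{P}_c^{new},\mathbf{P}_h^{new})$ so that the cancellation $\mathbf{P}_h\,\eta^f=0$ is invoked with the \emph{unrotated} $\mathbf{P}_h$ in the feedback term, while the orthogonal decomposition $\mathbf{P}_c^{new}+\mathbf{P}_h^{new}=\mathbf{I}$ is invoked with the rotated pair. A secondary subtlety is to verify explicitly that Assumption B delivers the analog of Lemma \ref{lm:ipd} with $\mathbf{E}_{2M}$ in place of $\mathbf{D}_{2M}$; this should follow by rerunning that lemma's proof with $(L,E)$ in the role of $(B,D)$, but it deserves a sentence in the write-up.
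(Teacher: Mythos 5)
Your proposal is correct and follows essentially the same route as the paper, which omits this proof and simply points to the proof of Theorem \ref{thm:nmp-filter}: your error dynamics, the cancellation $\mathbf{P}_h\eta^f(k)=0$ via the fault-channel analog of (\ref{eq:du-ce}), and the final annihilation of the residual in $\mathcal{N}(\mathbf{E}_{2M})$ by the $\mathbf{E}_{2M}$-analog of Lemma \ref{lm:ipd} are exactly the mirrored steps of Appendix \ref{app:nmp-filter}. The only (immaterial) difference is that you argue convergence by noting the forcing $\mathbf{P}_h^{new}(\eta^f(k+1)-\eta^f(k))$ vanishes for a step fault and the closed-loop matrix is Schur, whereas the paper phrases the same fact through the $\mathcal{Z}$-transform and the final value theorem; your explicit remarks on $\mathbf{I}_p^f$ and on rechecking Lemma \ref{lm:ipd} with $(L,E)$ in place of $(B,D)$ are appropriate and consistent with Assumption B.
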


\begin{proof}
Proof is not included, since it is similar to the proof of Theorem \ref{thm:nmp-filter}.
\end{proof}
One can also  establish a result that is similar to Theorem \ref{thm:nmp-filter-rmp} for the case when the fault signal is a ramp (drift) loss of effectiveness (LOE) function.
The details are not included for brevity.

This now concludes our proposed methodology for estimation of the loss of effectiveness faults for systems having transmission zeros anywhere on the complex plan. In the next section, we provide illustrative simulations that demonstrate the merits and capabilities of our proposed methodologies.

\section{Four Case Studies}\label{sec:sims}
{For the \textit{first simulation case study} } consider a first order non-minimum phase SISO system that is governed by \footnote{For all simulations in this section, we set $M=n$.},

\begin{equation}\label{sim:sys1}
  S : \left\lbrace \begin{array}{l} x(k+1)=0.5x(k)+u(k) \\
  y(k)=-x(k)+u(k)
   \end{array} \right.
\end{equation}

The transfer function of the system is given by,
\begin{equation}
  G(z)=\frac{z-1.5}{z-0.5}
\end{equation}

For the above system, it follows that $\mathbf{C}_{}=\left[\begin{array}{c}-1\\-0.5\end{array}\right]$, and consequently $\mathbf{H}_{}=\left[\begin{array}{cc}-0.45&0.90\end{array}\right]$. The rotation matrix $\mathbf{R}$ is given by,
\[\mathbf{R}(\theta)=\left[\begin{array}{cc}cos(\theta)&-sin(\theta)\\sin(\theta)& cos(\theta)\end{array}\right]\]
According to Remarks \ref{rm:obs1} and \ref{rm:obs2}, the pair $(\mathbf{P}_c^{new}\tilde{A}+\mathbf{P}_h^{new},-\mathbf{P}_h)$ is not observable for $\theta =\frac{q\pi}{2},q \in \mathbb{Z}$, where,
\[\tilde{A}=\left[\begin{array}{cc} 1.2 &0.6\\0.6& 0.3\end{array}\right];\mathbf{P}_h=\left[\begin{array}{cc} 0.2 &-0.4\\-0.4& 0.8\end{array}\right]\]
\[ \mathbf{P}_c=\left[\begin{array}{cc} 0.8 &0.4\\0.4& 0.2\end{array}\right]\]
\[\mathbf{P}_h^{new}=\mathbf{R}(\theta)\mathbf{P}_h(\mathbf{R}(\theta))^T;\mathbf{P}_c^{new}=\mathbf{R}(\theta)\mathbf{P}_c(\mathbf{R}(\theta))^T\]
All the other values of $\theta$ will yield an $\mathbf{R}$ such that the pair $(\mathbf{P}_c^{new}\tilde{A}+\mathbf{P}_h^{new},-\mathbf{P}_h)$ is observable. Hence, one can arbitrarily place the poles of the system. We  select the gain $\mathbf{K}_2$ to place the poles at $z_{1,2}= \pm 0.1$ for two different values of $\theta$ that are randomly selected for comparison purposes as follows,
\[\theta=\frac{5\pi}{180}\mbox{ or } \frac{85\pi}{180}\rightarrow\]
\[ \mathbf{P}_c^{new}\tilde{A}+\mathbf{P}_h^{new}=\left[\begin{array}{cc} 1.41 &0.12\\0.25& 1.07\end{array}\right], \mathbf{K}_2=\left[\begin{array}{cc}20.09 & -40.18 \\11.29 & -22.58\end{array}\right]\]
and
\[\theta=\frac{45\pi}{180} \rightarrow\]
\[\mathbf{P}_h^{new}\tilde{A}+\mathbf{P}_h^{new}=\left[\begin{array}{cc} 1.20 &-0.15\\0.60& 0.55\end{array}\right],\mathbf{K}_2=\left[\begin{array}{cc}1.95 & -3.90 \\1.85 & -3.7\end{array}\right]\]

The closer $\theta$ is to $\frac{q\pi}{2},q \in \mathbb{Z}$ implies that a higher gain is required. This is an important consideration as it may lead to robustness issues when the system is subject to disturbances and noise. Using Theorem \ref{thm:nmp-filter}, the inverse filter for the above system when $\theta =\frac{45\pi}{180}$ is given by,
\begin{equation}\label{eq:invsys1}
  S^{inv} : \left\lbrace \begin{array}{l} \hat{\eta}(k+1)=\left(\begin{array}{cc} 3.15& -4.05 \\
  2.45&-3.15\end{array}\right)\hat{\eta}(k)+\\ \left[\begin{array}{cccc}-0.25&0.12&0.25&0\\
    -0.75&0.37&0.75&0\end{array}\right]\mathcal{U}(k-2M) \\
  \hat{\mathbf{U}}_{}(k)=\left(\begin{array}{cc}  1& 0 \\
   1&1\end{array}\right)\hat{\eta}(k)+\mathbf{U}^{aux}_{}(k-2M)\\
   \hat{u}(k)=\mathbf{I}_p \hat{\mathbf{U}}_{}(k)
   \end{array} \right.
\end{equation}
where $M=n$, $\mathcal{U}(k-2M)$ is defined by equation (\ref{eq:u-mathcal}) and $\mathbf{U}_{}^{aux}(k-2M)$ is given by,
\[\mathbf{U}_{}^{aux}(k-2M)=\left[\begin{array}{cc} 0.2308 &-0.4615\\-0.1538& 0.3077\end{array}\right]\mathbf{Y}_{}(k-2M)\]
Figure \ref{fig:sim1} shows the performance of the input inversion estimation filter corresponding to both values of $\mathbf{K}_2$.
\begin{figure}
  \centering
  \includegraphics[width=0.45\textwidth]{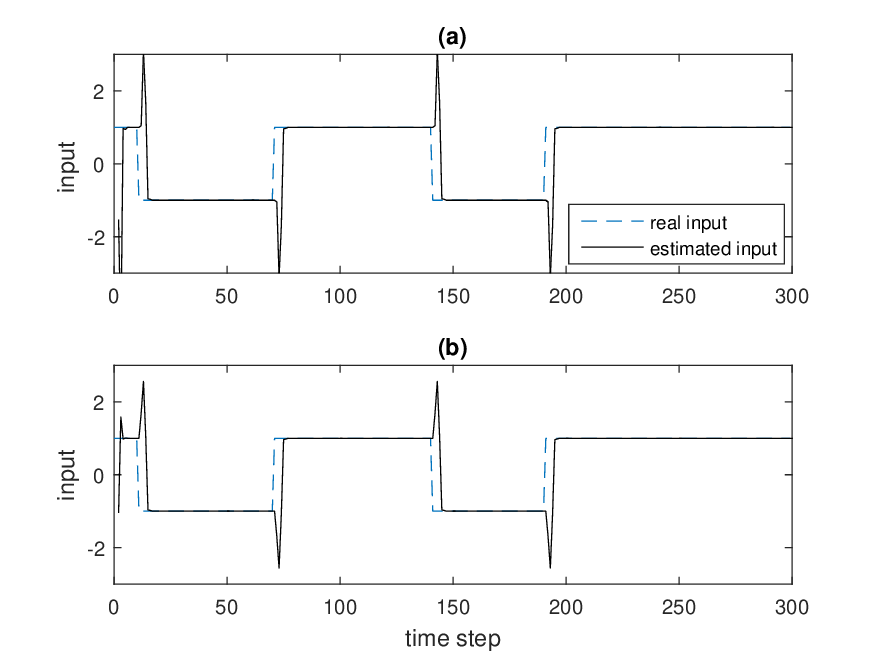}
  \caption{Input estimation for the system (\ref{sim:sys1}) using two different rotation matrices: (a) $\theta=\frac{5\pi}{180}$, and (b) $\theta=\frac{45\pi}{180}$.}\label{fig:sim1}
\end{figure}

{For the \textit{second simulation case study}}, we consider a non-minimum phase MIMO system that is governed by,
\begin{equation}\label{eq:sys3}
  S : \left\lbrace \begin{array}{l} x(k+1)=\left(\begin{array}{cccc}0& 0& 0&0.10 \\
  1&0& 0&-0.09 \\ 0&1&0& 0.28\\
  0&0&1&0.07 \end{array}\right)x(k)+\\ \left(\begin{array}{cc} 1&-0.80 \\
  0&-2.05 \\ 0& 5.13\\
  0&1.78 \end{array}\right)f(k)+w(k) \\
  y(k)=\left(\begin{array}{cccc}  -0.46& -0.35& -0.1& 0.14 \\
   0.59&-0.52&-0.01&0.04\end{array}\right)x(k)+v(k)
   \end{array} \right.
\end{equation}
The above system has two transmission zeros at $z_{1,2}=(-1.48,0.45)$. The system is subjected to \textit{both} a step and a ramp loss of effectiveness (LOE) faults in the channels 1 and 2, respectively. A random rotation matrix ($\mathbf{R} \in \mathbb{R}^{16 \times 16}$)\footnote{$\mathbf{R} \in \mathbb{R}^{2Ml\times 2Ml}$, $M=n=4$, and $l=2$.} is generated. The gain matrix $\mathbf{K_2} \in \mathbb{R}^{16 \times 16}$ is chosen such that $16$ poles of the filter (\ref{eq:nmp-filter-rmp}) are placed between $-0.1$ and $0.1$. The fault estimation results are shown in Figure \ref{fig:sim2}, which demonstrates the merits and capabilities of our proposed scheme for fault estimation of non-minimum phase systems.
\begin{figure}
  \centering
  \includegraphics[width=0.45\textwidth]{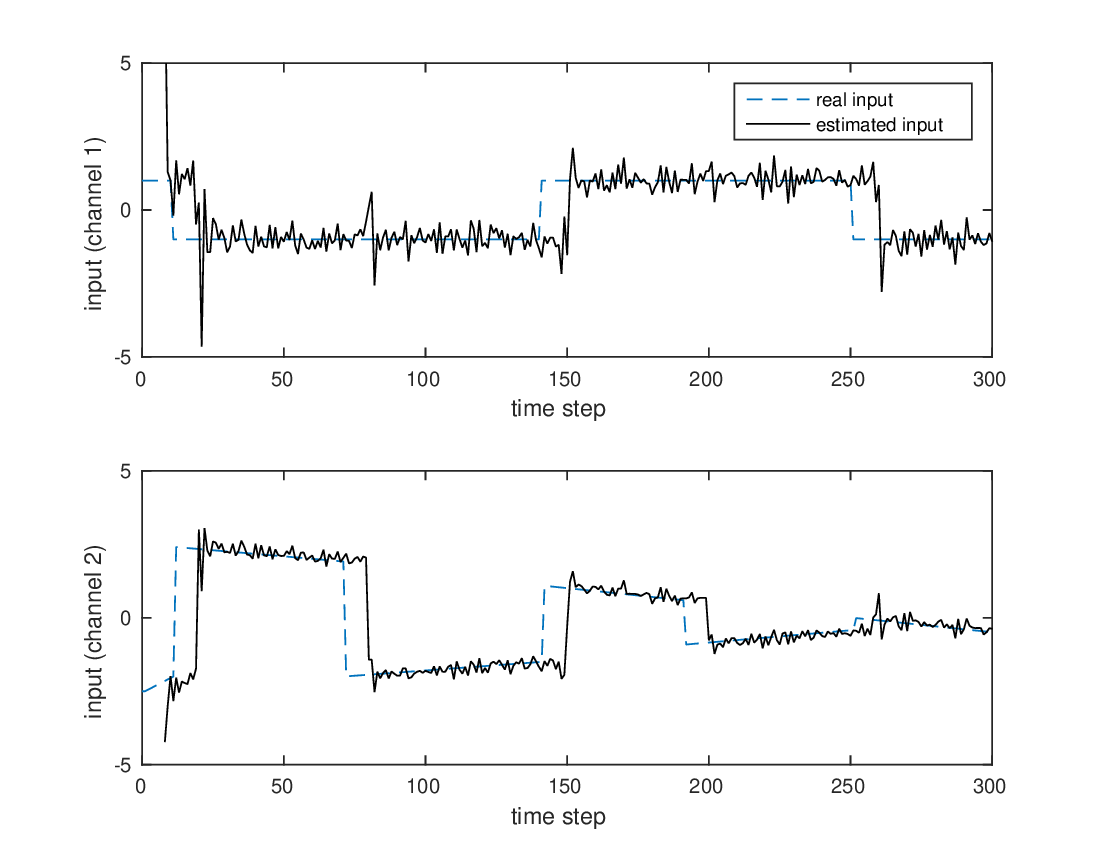}
  \caption{{The LOE fault input estimation of the MIMO non-minimum phase system (\ref{eq:sys3}).}}\label{fig:sim2}
\end{figure}
The most important advantage of our proposed solution arises as a result of the fact that it can handle systems with transmission zeros everywhere on the unit circle except at $z=1$.

In order to demonstrate the above point, consider the following {\textit{third simulation case study} } of the fault-to-output system,
\begin{equation}\label{eq:sysu}
\frac{Y(z)}{F(z)}=\frac{(z+1)(z^2+1)}{z^4}
\end{equation}
The simulation results for input estimation of this system are shown in Figure \ref{fig:simu}. The rotation matrix for constructing the filter (\ref{eq:nmp-filter}) is randomly generated. The gain matrix $\mathbf{K}_2$ is chosen such that the poles of the filter (\ref{eq:nmp-filter}) are placed at $z_1,\ldots,z_8=\pm 0.5, \pm 0.3571, \pm 0.2143, \pm 0.0714$. As can be seen from Figure \ref{fig:simu}, our proposed solution can successfully reconstructs the unknown fault even if the system has several transmission zeros on the unit circle.

Finally, for the { \textit{fourth simulation case study } } and as a comparative study, consider a MIMO system that is taken from the reference \cite{Marro2010815} with $A \in \mathbb{R}^{4 \times 4}$, $B \in \mathbb{R}^{4 \times 2}$ and $C \in \mathbb{R}^{2 \times 4}$ as follows,
\begin{equation}\label{eq:zatt-sys}
\left\lbrace\begin{array}{l} x(k+1)=\left[\begin{array}{cccc}0.6 &-0.3& 0&0 \\0.1 &1 &0 &0\\ -0.4& -1.5& 0.4& -0.3\\ 0.3 &1.1& 0.2&0.9\end{array}\right]x(k)+\\ \left[\begin{array}{cc}0 &0.4\\0&0 \\ 0&-0.1 \\0.1 &0.1 \end{array}\right]u(k)+w(k) \\
y(k)=\left[\begin{array}{cccc}1& 2& 3&4\\ 2& 1& 5& 6\end{array}\right]x(k)+v(k)
\end{array}\right.
\end{equation}
The system (\ref{eq:zatt-sys}) has two zeros at $z_1=0.6072$ and $z_1=1.9928$. The authors of \cite{Marro2010815} proposed a geometric approach and applied it to the system (\ref{eq:zatt-sys}) to achieve an \textit{almost} perfect estimation of the states and unknown inputs with a delay of 20 time steps ($n_d=20$). For comparison, our simulation results for the same example is shown in Figure \ref{fig:zatt}, which demonstrate that by using our proposed methodology the unknown inputs are almost perfectly reconstructed with only a delay of $n_d=8$. It should be noted that the approach that is proposed in \cite{Marro2010815} can handle any type of unknown input, whereas our approach is limited to step and ramp unknown inputs which covers a wide range of faults that occur in physical systems. The main advantage of our proposed methodology over the geometric approach that is proposed in \cite{Marro2010815} is the fact that it can handle systems with transmission zeros on the unit circle, whereas the approach in \cite{Marro2010815} \textit{cannot} handle this situation.

\begin{figure}
  \centering
  \includegraphics[width=0.45\textwidth]{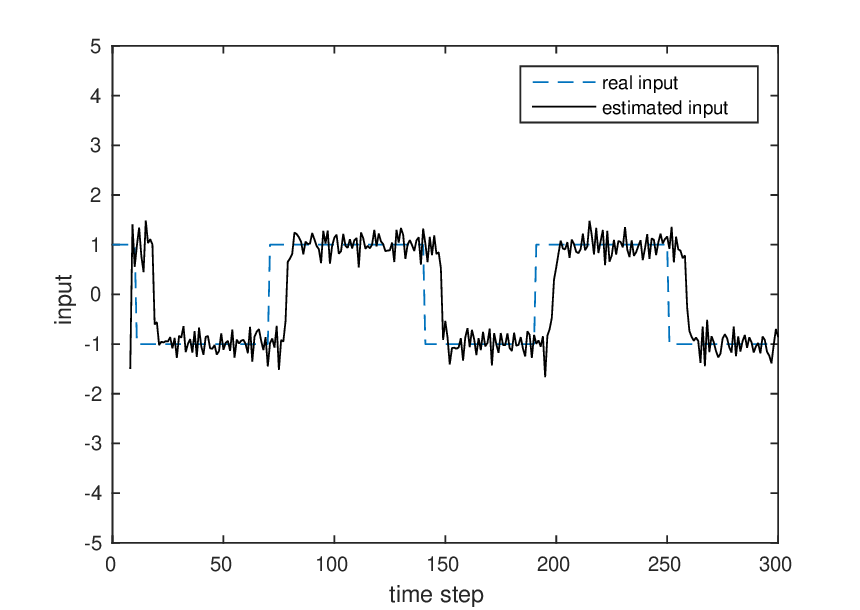}
  \caption{{The LOE fault estimation for the system (\ref{eq:sysu}). Noise is also injected to the input of the system.}}\label{fig:simu}
\end{figure}

\begin{figure}
  \centering
  \includegraphics[width=0.45\textwidth]{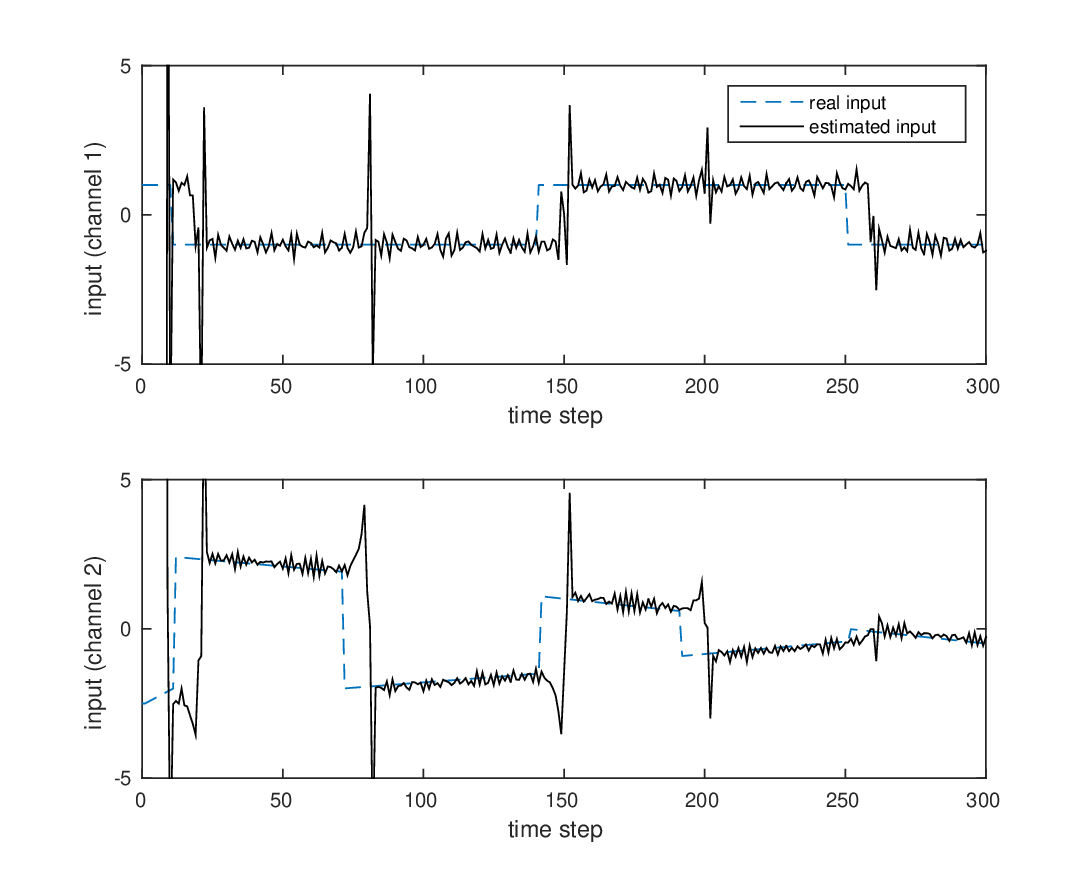}
  \caption{{Input estimation for the system (\ref{eq:zatt-sys}).}}\label{fig:zatt}
\end{figure}
\section{Conclusion}
We have developed an inversion-based fault estimation scheme for linear discrete-time systems. It was shown that our scheme yields an unbiased estimate of certain types of faults even if the fault-to-output dynamics has transmission zeros outside or on the unit circle (except at $z=1$). This is achieved by introducing a feedback that not only stabilizes the inverse dynamics (except those having transmission zeros at $z=1$), but also it provides an unbiased tracking of the unknown input. We have discussed the properties of the proposed inverse filter and conditions that are required for its stable design. We have also provided several illustrative simulation case studies that demonstrate the capabilities of our proposed methodologies. Yet, further research are required to generalize our proposed approach to a broader categories of faults.

\bibliographystyle{ieeetr}
\bibliography{engine}

\begin{thebibliography}{10}

\bibitem{chen2}
J.~Chen, R.~Patton, and H.~Zhang, ``Design of unknown input observers and
  robust fault detection filters,'' {\em International Journal of Control},
  vol.~63, no.~1, pp.~85--105, 1996.

\bibitem{Tan20151048}
D.~Tan, R.~J. Patton, and X.~Wang, ``A relaxed solution to unknown input
  observers for state and fault estimation,'' {\em IFAC Symposium on Fault
  Detection, Supervision and Safety for Technical Processes}, vol.~48, no.~21,
  pp.~1048 -- 1053, 2015.

\bibitem{alwi2006robust}
H.~Alwi and C.~Edwards, ``Robust sensor fault estimation for tolerant control
  of a civil aircraft using sliding modes,'' in {\em American Control
  Conference, 2006}, pp.~6--pp, 2006.

\bibitem{szigeti}
F.~Szigeti, C.~Vera, J.~Bokor, and A.~Edelmayer, ``Inversion based fault
  detection and isolation,'' in {\em Proceedings of the 40th IEEE Conference on
  Decision and Control}, vol.~2, pp.~1005--1010, 2001.

\bibitem{edel1}
A.~Edelmayer, J.~Bokor, and Z.~Szab{\'o}, ``Inversion-based residual generation
  for robust detection and isolation of faults by means of estimation of the
  inverse dynamics in linear dynamical systems,'' {\em International Journal of
  Control}, vol.~82, no.~8, pp.~1526--1538, 2009.

\bibitem{figu}
M.~Figueroa, M.~Bonilla, M.~Malabre, and J.~Martinez, ``On failure detection by
  inversion techniques,'' in {\em 43rd IEEE Conference on Decision and
  Control}, vol.~5, pp.~4770 -- 4775, 2004.

\bibitem{kulcsar}
B.~Kulcs{\'a}r and M.~Verhaegen, ``Robust inversion based fault estimation for
  discrete-time lpv systems,'' {\em IEEE Transactions on Automatic Control},
  vol.~57, no.~6, pp.~1581--1586, 2012.

\bibitem{szabo}
Z.~Szabo, A.~Edelmayer, and J.~Bokor, ``Inversion based {FDI} for sampled {LPV}
  systems,'' in {\em Conference on Control and Fault-Tolerant Systems
  (SysTol)}, pp.~82 --87, oct. 2010.

\bibitem{Bro}
R.~Brockett and M.~Mesarovic, ``{The reproducibility of multivariable
  systems},'' {\em Journal of Mathematical Analysis and Applications}, vol.~11,
  pp.~548--563, 1965.

\bibitem{silverman}
L.~Silverman, ``Inversion of multivariable linear systems,'' {\em IEEE
  Transactions on Automatic Control}, vol.~14, pp.~270 -- 276, jun 1969.

\bibitem{SM}
J.~Massey and M.~Sain, ``Inverses of linear sequential circuits,'' {\em IEEE
  Transactions on Computers}, vol.~C-17, pp.~330 -- 337, april 1968.

\bibitem{moylan}
P.~Moylan, ``Stable inversion of linear systems,'' {\em IEEE Transactions on
  Automatic Control}, vol.~22, pp.~74 -- 78, feb 1977.

\bibitem{gili}
S.~Gilijns, {\em Kalman filtering techniques for system inversion and data
  assimilation}.
\newblock PhD thesis, K.U.Leuven, Leuven, Belgium, 2007.

\bibitem{chavan2015delayed}
R.~A. Chavan and H.~J. Palanthandalam-Madapusi, ``Delayed recursive state and
  input reconstruction,'' {\em arXiv preprint arXiv:1509.06226}, 2015.

\bibitem{palanthandalam2007unbiased}
H.~J. Palanthandalam-Madapusi and D.~S. Bernstein, ``Unbiased minimum-variance
  filtering for input reconstruction,'' in {\em American Control Conference,
  2007. ACC'07}, pp.~5712--5717, 2007.

\bibitem{kirtikar2009delay}
S.~Kirtikar, H.~Palanthandalam-Madapusi, E.~Zattoni, and D.~S. Bernstein,
  ``l-delay input reconstruction for discrete-time linear systems,'' in {\em
  Proceedings of the 48th IEEE Conference on Decision and Control, held jointly
  with the 28th Chinese Control Conference. CDC/CCC 2009.}, pp.~1848--1853,
  2009.

\bibitem{floquet2004sliding}
T.~Floquet and J.-P. Barbot, ``A sliding mode approach of unknown input
  observers for linear systems,'' in {\em Decision and Control, 2004. CDC. 43rd
  IEEE Conference on}, vol.~2, pp.~1724--1729, 2004.

\bibitem{Marro2010815}
G.~Marro and E.~Zattoni, ``Unknown-state, unknown-input reconstruction in
  discrete-time nonminimum-phase systems: Geometric methods,'' {\em
  Automatica}, vol.~46, no.~5, pp.~815 -- 822, 2010.

\bibitem{patton2013issues}
R.~J. Patton, P.~M. Frank, and R.~N. Clark, {\em Issues of fault diagnosis for
  dynamic systems}.
\newblock Springer Science \& Business Media, 2013.

\bibitem{DAVISON1974643}
E.~Davison and S.~Wang, ``Properties and calculation of transmission zeros of
  linear multivariable systems,'' {\em Automatica}, vol.~10, no.~6, pp.~643 --
  658, 1974.

\bibitem{sontag2013mathematical}
E.~D. Sontag, {\em Mathematical control theory: deterministic finite
  dimensional systems}, vol.~6.
\newblock Springer Science \& Business Media, 2013.

\end{thebibliography}


\section{Appendix: Proof of Lemma \ref{lm:square-d-rank}}\label{app:square-d-rank}

If $D$ is full rank, then $rank(\mathbf{D}_{})$ is obviously greater than $2Mm-n$. If $D$ is zero or rank deficient, since the system has no transmission zeros, then at least $CA^{n-1}B$ is full rank. Note that $CA^{n-1}B$ does not appear in $\mathbf{D}_{}$ from the column $2Mm-n+1$ there after. Hence, it follows that $rank(\mathbf{D}_{})\geq 2Mm-n$.

For a square system, the equality also holds since one can express the measurement equation in the matrix format as follows,
\begin{equation}
\mathbb{E}\left(\mathbf{Y}_{}(k-2M)\right)=\mathbb{E}\left(\left[\begin{array}{cc}\mathbf{C}_{} & \mathbf{D}_{}\end{array}\right]\left[\begin{array}{c}x(k-2M) \\ \mathbf{U}_{}(k-2M) \end{array}\right]\right)
\end{equation}
Therefore, if $rank(\mathbf{D}_{})>2Mm-n$, certain columns of $\mathbf{C}_{}$ are linearly dependent with the columns of $\mathbf{D}_{}$, which implies that there exist a nonzero initial $x(k-2M)$ and a nonzero input sequence that will yield a zero output. This results in a contradiction, and therefore the rank condition should be satisfied.
\section{Appendix: Proof of Theorem \ref{thm:u-uaux}}\label{app:u-uaux}
If we subtract equation (\ref{eq:y-um}) from the measurement equation of the system $\mathbf{S}^{aug}$ and rewrite it in a matrix format, we will obtain,
{
\begin{equation}
\mathbb{E}\left(\left[\begin{array}{cc}\mathbf{C}_{} & \mathbf{D}_{}\end{array}\right]\left[\begin{array}{c}x(k-2M)-z(k-2M) \\ \mathbf{U}_{}(k-2M)-\mathbf{U}_{}^{aux}(k-2M) \end{array}\right]\right)=0
\end{equation}}
Since the system $\mathbf{S}$ does not have any transmission zeros, the columns of $\mathbf{C}_{}$ and $\mathbf{D}_{}$ are linearly independent. Hence, $\mathbf{C}_{}(x(k-2M)-z(k-2M))=0$, and
\begin{equation}\label{eq:hd-u-uaux}
{\mathbb{E}\left(\mathbf{D}_{}(\mathbf{U}_{}(k-2M)-\mathbf{U}_{}^{aux}(k-2M))\right)=0}
\end{equation}
On the other hand, from Lemma \ref{lm:square-d-rank}, it follows that the first $2Mm-n$ columns of $\mathbf{D}_{}$ must be linearly independent. Therefore, one can transform equation (\ref{eq:hd-u-uaux}) into the following format using basic operations on the last $n$ columns of $\mathbf{D}_{}$ and the last $n$ rows of {$\mathbb{E}\left(\mathbf{U}_{}^{aux}(k-2M)-\mathbf{U}_{}(k-2M)\right)$}. Specifically, we have,
{
\begin{equation}
\mathbb{E}\left(\left[\begin{array}{cc}\mathcal{D} & 0\end{array}\right]\left[\begin{array}{c}\mathcal{U}_{}^{aux}(k-2M)-\mathcal{U}_{}(k-2M) \\ \mathcal{X}\end{array}\right]\right)=0
\end{equation}}
where $\mathcal{D}$ is a nonsingular matrix that has the first $2Mm-n$ columns of $\mathbf{D}_{}$ and $\mathcal{U}_{}^{aux}(k-2M)-\mathcal{U}_{}(k-2M)$ is the first $2Mm-n$ rows of $\mathbf{U}_{}^{aux}(k-2M)-\mathbf{U}_{}(k-2M)$. Therefore, the first $2Mm-n$ rows of {$\mathbb{E}\left(\mathbf{U}_{}^{aux}(k-2M)- \mathbf{U}_{}(k-2M)\right)$} are zero as stated. This completes the proof of the theorem.
\section{Appendix: Proof of Theorem \ref{thm:zeros-of-a-bipkc}}\label{app:zeros-of-a-bipkc}

Note that the eigenvalues of $A-B\mathbf{I}_p\mathbf{D}_{}^+\mathbf{C}_{}$ are obtained by solving,
\begin{equation}\label{eq:eig-org}
|z\mathbf{I}-A+B\mathbf{I}_p\mathbf{D}_{}^+\mathbf{C}_{}|=0
\end{equation}
If the system is square, then $\mathbf{D}_{}^+$ is a nonzero square matrix. Therefore, one can equivalently solve the equation,
\begin{equation}
\left|\mathbf{D}_{}^+\right||z\mathbf{I}-A+B\mathbf{I}_p\mathbf{D}_{}^+\mathbf{C}_{}|=0
\end{equation}
On the other hand, using the Schur identity, we have,
\begin{equation}\label{eq:schur}
\left|\mathbf{D}_{}^+\right||z\mathbf{I}-A+B\mathbf{I}_p\mathbf{D}_{}^+\mathbf{C}_{}| = \left|\left[\begin{array}{cc}
z\mathbf{I}-A &-B\mathbf{I}_p \\ \mathbf{C}_{} & \mathbf{D}_{} \end{array}\right]\right|
\end{equation}
Let us partition the terms $\mathbf{C}_{}$ and  $\mathbf{D}_{}$ as follows,
\begin{equation}
\mathbf{C}_{} = \left(\begin{array}{c} C \\ \hline \mathcal{C}^- \end{array} \right)=\left(\begin{array}{c}C\\ \hline CA\\ \vdots \\ CA^{2M-1}\end{array}\right)
\end{equation}
\begin{equation}
\small{
\mathbf{D}_{}=\left(\begin{array}{c|c}D & 0 \\ \hline \mathcal{D}^-_{21} & \mathcal{D}^-_{22}\end{array}\right) =\left(\begin{array}{c|ccc} D&0& \ldots & 0\\ \hline CB& D& \ldots &0 \\ \vdots & \vdots & \vdots & \vdots \\ CA^{2M-1}B&CA^{2M-2}B&\ldots & D\end{array}\right)}
\end{equation}
It now follows that the right-hand side of equation (\ref{eq:schur}) can be partitioned as,
\begin{equation}
\left[\begin{array}{cc}
z\mathbf{I}-A &-B\mathbf{I}_p \\ \mathbf{C}_{} & \mathbf{D}_{} \end{array}\right]= \left[\begin{array}{cc|c}
z\mathbf{I}-A &-B & 0\\ C & D & 0 \\ \hline \mathcal{C}^- & \mathcal{D}^-_{21} &\mathcal{D}^-_{22} \end{array}\right]
\end{equation}
Thus, if $\mathcal{D}^-_{22}$ is full row rank, according to the Schur identity, equation (\ref{eq:eig-org}) has only one set of solutions that are given by,
\begin{equation}
\left|\left[\begin{array}{cc}
z\mathbf{I}-A &-B\mathbf{I}_p \\ \mathbf{C}_{} & \mathbf{D}_{} \end{array}\right]\right|=0
\end{equation}
and these are exactly the transmission zeros of the system $\mathbf{S}$. However, if $\mathcal{D}^-_{22}$ is rank deficient, then certain rows of $\left[\begin{array}{ccc}\mathcal{C}^- & \mathcal{D}^-_{21} &\mathcal{D}^-_{22}\end{array}\right]$ are linearly dependent with the rows of $\left[\begin{array}{ccc}-A& -B &0\end{array}\right]$. Hence, $z=0$ is also a solution. On the other hand, since equation (\ref{eq:eig-org}) must have $n$ eigenvalues, therefore if the system $\mathbf{S}$ has $p$ transmission zeros, then $z=0$ is a solution of multiplicity $n-p$, and this completes the proof of the theorem.

\section{Appendix: Proof of Lemma \ref{lm:ipd}}\label{app:lm-ipd}
Note that the first $m$ columns of $\mathbf{D}_{}$ are linearly independent. Therefore,
\[rank(\left[\begin{array}{c}\mathbf{I}_p \\ \mathbf{D}_{}\end{array}\right])=rank(\mathbf{D}_{})\]
which implies that the subspace spanned by the rows of $\mathbf{I}_p$ belongs to the row space that is spanned by the rows of $\mathbf{D}_{}$. Therefore, $\mathbf{I}_P.\mathcal{N}(\mathbf{D}_{})=0$. This completes the proof of the lemma.
\section{Appendix: Proof of Theorem \ref{thm:mp-filter}}\label{app:mp-filter}
First, we show that {$\mathbb{E}(\hat{e}(k)-e(k)) \rightarrow 0$ as $k \rightarrow \infty$}. Then we show this will yield {$\mathbb{E}(\hat{u}(k)-u(k-2M)) \rightarrow 0$ as $k \rightarrow \infty$}. Note that the governing dynamics of $e(k)$ is given by equation (20). Therefore, in view of equations (\ref{eq:state-error-dyn}) and (\ref{eq:mp-filter}) we have, {$\hat{e}(k+1)-e(k+1)=(A-B\mathbf{I}_p\mathbf{D}_{}^+\mathbf{C}_{})(\hat{e}(k)-e(k))+\mathcal{ST}$, where $\mathcal{ST} = -\mathbf{D}^+(\mathbf{G}\mathbf{W}(k-2M)+\mathbf{V}(k-2M)-w(k-2M))$}. Since the system $\mathbf{S}$ is minimum phase, therefore, according to Theorem \ref{thm:zeros-of-a-bipkc}, {$\mathbb{E}(\hat{e}(k)-e(k)) \rightarrow 0$ as $k \rightarrow \infty$} (note that Theorem \ref{thm:zeros-of-a-bipkc} implies that $A-B\mathbf{I}_p\mathbf{D}_{}^+\mathbf{C}_{}$ is Hurwitz if the system $\mathbf{S}$ is minimum phase). Note that the error in the unknown input reconstruction is given by,
{$\mathbb{E}(\hat{\mathbf{U}}_{}(k)-\mathbf{U}_{}(k-2M))
=\mathbb{E}(-\mathbf{D}_{}^+\mathbf{C}_{}\hat{e}(k)+\mathbf{U}^{aux}_{}(k-2M)-\mathbf{U}_{}(k-2M))$ $
\rightarrow$ $\mathbb{E}( -\mathbf{D}_{}^+\mathbf{C}_{}e(k)-\delta \mathbf{U}_{}(k))
= \mathbf{D}_{}^+\mathbf{D}_{} \delta \mathbf{U}_{}(k)-\delta \mathbf{U}_{}(k)$}.
Consequently, we have,
\begin{equation}\label{eq:tmp1}
{\mathbb{E}(\hat{u}(k)-u(k-2M)) \rightarrow \mathbf{I}_p(\mathbf{D}_{}^+\mathbf{D}_{} -\mathbf{I})\delta \mathbf{U}_{}(k)}
\end{equation}
where $(\mathbf{D}_{}^+\mathbf{D}_{} -\mathbf{I})$ is the projector onto the null space of $\mathbf{D}_{}$. Since $\mathbf{I}_P .\mathcal{N}(\mathbf{D}_{})=0$, according to Lemma \ref{lm:ipd}, the right-hand side of equation (\ref{eq:tmp1}) is zero. {Therefore, it follows that $\mathbb{E}(\hat{u}(k)) \rightarrow \mathbb{E} (u(k-2M))$ as $k \rightarrow \infty$}.

\section{Appendix: Proof of Lemma \ref{rm:obs3}}\label{app:obs3}
We use the Hautus test (\cite{sontag2013mathematical}) to show this lemma. The observability matrix of the pair $(\mathbf{P}_c^{new}\tilde{A}+\mathbf{P}_h^{new},-\mathbf{P}_h)$ is equivalent to the controllability matrix of the pair $((\tilde{A}+\mathbf{P}_h^{new})^T,-(\mathbf{P}_h)^T)$. The pair $((\mathbf{P}_c^{new}\tilde{A}+\mathbf{P}_h^{new})^T,-(\mathbf{P}_h)^T)$ is controllable if
\[rank\left(\left[\begin{array}{cc}(\mathbf{P}_c^{new}\tilde{A}+\mathbf{P}_h^{new})^T-\lambda \mathbf{I}&-(\mathbf{P}_h)^T\end{array}\right]\right)=2Ml\]
for all $\lambda \in \mathbb{C}$. We now show that when the square system $\mathbf{S}$ has a transmission zero equal to 1, then this condition is not satisfied for $\lambda=1$. Equivalently, there exists a nonzero $w$ such that $w\Theta=0$, for $\lambda=1$, where
$\Theta=\left[\begin{array}{cc}(\mathbf{P}_c^{new}\tilde{A}+\mathbf{P}_h^{new})^T-\lambda \mathbf{I}&-(\mathbf{P}_h)^T\end{array}\right]$.
When $\lambda=1$, it follows that,
\begin{eqnarray}
(\mathbf{P}_c^{new}\tilde{A}+\mathbf{P}_h^{new})^T-\lambda \mathbf{I}&=&(\mathbf{P}_c^{new}\tilde{A}+\mathbf{P}_h^{new})^T- \mathbf{I} \nonumber \\
&=&(\mathbf{P}_c^{new}\tilde{A}-\mathbf{P}_c^{new})^T
\end{eqnarray}
Recall from Theorem \ref{thm:zeros-of-a-bipkc} that the transmission zeros of $\mathbf{S}$ are the eigenvalues of $A-B\mathbf{I}_p\mathbf{D}_{}^+\mathbf{C}_{}$. Hence, if the system $\mathbf{S}$ has a transmission zero equal to 1, there exists a nonzero $v$ such that $\mathbf{P}_c^{new}\tilde{A}v=\mathbf{P}_c^{new}v$. Therefore, by selecting
$w=\left[\begin{array}{cc} v^T& 0\end{array}\right]$,
one can achieve $w\Theta=0$ independent of the choice of the rotation matrix $\mathbf{R}$. This completes the proof of the lemma.
\section{Appendix: Proof of Theorem \ref{thm:nmp-filter}}\label{app:nmp-filter}
{First, it is shown that $\mathbb{E}(\hat{\eta}(k)-\eta(k)) \rightarrow 0$ as $k \rightarrow \infty$. Then, we show that it follows that $\mathbb{E}(\hat{u}(k)-u(k-2M)) \rightarrow 0$ as $k \rightarrow \infty$. If one subtracts equation (\ref{eq:eta-hat}) from the equation (\ref{eq:eta-dyn}), one will have,}
{\begin{eqnarray}\label{eq:dum-eq1}
\hat{\eta}(k+1)&-&\eta(k+1)=(\tilde{A}+\mathbf{P}_h^{new}+\mathbf{K}_2\mathbf{P}_h)(\hat{\eta}(k)-\eta(k)) \nonumber \\ &+&(\mathbf{P}_h^{new}\tilde{A}-\mathbf{P}_h^{new})\eta(k)+\mathbf{P}_h^{new}\mathbf{C}_{}\mathcal{U}(k)+ \mathcal{ST}\nonumber \\
&=& (\tilde{A}+\mathbf{P}_h^{new}+\mathbf{K}_2\mathbf{P}_h)(\hat{\eta}(k)-\eta(k)) \nonumber \\ &+&\mathbf{P}_h^{new}(\eta(k+1)-\eta(k))+\mathcal{ST'}
\end{eqnarray}}
where,
\[ {\mathcal{ST'}=\mathbf{C}\mathbf{D}^+(\mathbf{G}\mathbf{W}(k-2M)-\mathbf{V}(k-2M))}\]
{Let us define $e(k)=\mathbb{E}(\hat{\eta}(k)-\eta(k))$. Also let us take the $\mathcal{Z}$-transform of both sides of equation (\ref{eq:dum-eq1}), which after some rearrangements gives us,}
\begin{equation}
e(z)=(zI-\tilde{A}-\mathbf{P}_h^{new}-\mathbf{K}_2\mathbf{P}_h)^{-1}\mathbf{P}_h^{new}(z-1)\eta(z)
\end{equation}
{If the input to the system $\mathbf{S}$ is a step function, then according to the final value theorem, we have
$\lim_{k\rightarrow\infty}e(k)=\lim_{z\rightarrow1}(z-1)e(z)=0$,
which implies that $\mathbb{E}(\hat{\eta}(k)-\eta(k)) \rightarrow 0$ as $k \rightarrow \infty$.}

The estimation error in the unknown input reconstruction is given by,
\begin{eqnarray}
\hat{\mathbf{U}}_{}(k)&-&\mathbf{U}_{}(k-2M)=\mathbf{D}_{}^+\hat{\eta}(k)\nonumber \\ &+&\mathbf{U}^{aux}_{}(k-2M)-\mathbf{U}_{}(k-2M) \nonumber \\
&\rightarrow& \mathbf{D}_{}^+\eta(k)-\delta \mathbf{U}_{}(k) \nonumber \\
&=& \mathbf{D}_{}^+\mathbf{D}_{} \delta \mathbf{U}_{}(k)-\delta \mathbf{U}_{}(k)
\end{eqnarray}
Thus, we have,
{
\begin{equation}\label{eq:tmp2}
\mathbb{E}(\hat{u}(k)-u(k-2M)) \rightarrow \mathbf{I}_p(\mathbf{D}_{}^+\mathbf{D}_{} -\mathbf{I})\delta \mathbf{U}_{}(k)
\end{equation}}
where $(\mathbf{D}_{}^+\mathbf{D}_{} -\mathbf{I})$ is the projector onto the null space of $\mathbf{D}_{}$. Since $\mathbf{I}_P .\mathcal{N}(\mathbf{D}_{})=0$, according to Lemma \ref{lm:ipd}, the right-hand side of equation (\ref{eq:tmp1}) is zero. Therefore, it can be concluded that {$\mathbb{E}(\hat{u}(k)) \rightarrow \mathbb{E}(u(k-2M))$ as $k \rightarrow \infty$}. This completes the proof of the theorem.
\section{Appendix: Proof of Theorem \ref{thm:nmp-filter-rmp}}\label{app:nmp-filter-rmp}
{First, it is shown that $\mathbb{E}(\hat{\eta}(k)-\eta(k)) \rightarrow 0$ as $k \rightarrow \infty$. Then we show that it follows that $\mathbb{E}(\hat{u}(k)-u(k-2M)) \rightarrow 0$ as $k \rightarrow \infty$}. Let us define the following dummy variables,
\[A_1=\mathbf{P}_h^{new}\tilde{A}^2-2\mathbf{P}_h^{new}\tilde{A}+\tilde{A}+\mathbf{P}_h^{new}\]
\[A_2=\mathbf{P}_h^{new}\tilde{A}^2-2\mathbf{P}_h^{new}\tilde{A}+\tilde{A}+\mathbf{P}_h^{new}+\mathbf{K}_2 \mathbf{P}_h\]
\[B_1=\mathbf{P}_h^{new}\mathbf{C}_{}B_F\mathcal{U}(k-2M+1)\]
\[+\left(\mathbf{P}_h^{new}\tilde{A}-2\mathbf{P}_h^{new}+\mathbf{I}\right)\mathbf{C}_{}B_F\mathcal{U}(k-2M)\]
\[B_2=\mathbf{P}_h^{new}\mathbf{C}_{}B_F\mathcal{U}(k-2M+1)\]
\[+\left(\mathbf{P}_h^{new}\tilde{A}-2\mathbf{P}_h^{new}\right)\mathbf{C}_{}B_F\mathcal{U}(k-2M)\]
If we subtract the state equation of the filter (\ref{eq:nmp-filter-rmp}) from that of equation (\ref{eq:eta-dyn}), we will have
$\hat{\eta}(k+1)-\eta(k+1)=\tilde{A}\eta(k)+\mathbf{C}_{}B_F\mathcal{U}(k)-A_2\hat{\eta}(k)- B_1+\mathcal{ST}
= A_2(\hat{\eta}(k)-\eta(k))-A_1\eta(k)-B_2 +\mathcal{ST} = A_2(\hat{\eta}(k)-\eta(k))-\mathbf{P}_h^{new}(\eta(k+2)-2\eta(k+1)+\eta(k)) +\mathcal{ST}$.
Let us define as before {$e(k)=\mathbb{E}(\hat{\eta}(k)-\eta(k))$}. Also, let us take the $\mathcal{Z}$-transform of both sides of the equation which after some rearrangements gives,
$e(z)=-(zI-A_2)^{-1}\mathbf{P}_h^{new}(z-1)^2\eta(z)$.
If the input is a step or a ramp function, then according to the final value theorem it follows that,
$\lim_{k\rightarrow\infty}e(k)=\lim_{z\rightarrow1}(z-1)e(z)=0$,
which implies that {$\mathbb{E}(\hat{\eta}(k)-\eta(k)) \rightarrow 0$ as $k \rightarrow \infty$}. The remainder of the proof follows along similar lines as those invoked in the proof of Theorem 4 in Appendix \ref{app:nmp-filter}, and therefore these details are omitted for brevity.
\end{document}